\newtheorem{lemma}{Lemma}
\newtheorem{theorem}{Theorem}
\newtheorem{corollary}{Corollary}
\theoremstyle{definition}
\newcommand{\val}{\mathsf{val}}
\DeclareMathOperator*{\bE}{\mathbb{E}}
\newcommand{\bR}{\mathbb{R}}
\newcommand{\poa}{\mathsf{PoA}}
\newcommand{\eps}{\varepsilon}
\newcommand{\argmax}{\operatorname*{argmax}}
\newcommand{\rw}{\mathsf{rw}}
\newcommand{\dt}{\,\mathrm{d}t}
\renewcommand{\v}{\boldsymbol{v}}
\renewcommand{\b}{\boldsymbol{b}}
\title{Efficiency of the First-Price Auction in the Autobidding World}
\author{
    Yuan Deng\thanks{Google Research, \texttt{dengyuan@google.com}}
    \and
    Jieming Mao\thanks{Google Research, \texttt{maojm@google.com}}
    \and
    Vahab Mirrokni\thanks{Google Research, \texttt{mirrokni@google.com}}
    \and
    Hanrui Zhang\thanks{Carnegie Mellon University, \texttt{hanruiz1@cs.cmu.edu}}
    \and
    Song Zuo\thanks{Google Research, \texttt{szuo@google.com}}
}
\date{}
\begin{document}

\maketitle
\thispagestyle{empty}
\begin{abstract}
    We study the price of anarchy of the first-price auction in the autobidding world, where bidders can be either utility maximizers (i.e., traditional bidders) or value maximizers (i.e., autobidders).  We show that with autobidders only, the price of anarchy of the first-price auction is $1/2$, and with both kinds of bidders, the price of anarchy degrades to about $0.457$ (the precise number is given by an optimization).  These results complement the recent result by \citet{JinL22} showing that the price of anarchy of the first-price auction with traditional bidders only is $1 - 1/e^2$.  We further investigate a setting where the seller can utilize machine-learned advice to improve the efficiency of the auctions.  There, we show that as the accuracy of the advice increases, the price of anarchy improves smoothly from about $0.457$ to $1$.
\end{abstract}


\section{Introduction}

Autobidding, the procedure of delegating the bidding tasks to automated agents to procure advertisement opportunities in online ad auctions, is becoming the prevalent bidding methods and contributing to more than 80\% of total online advertising traffic \citep{WinNT}. Autobidding significantly simplifies the interaction between the advertisers and the ad platform: instead of submitting bids for each auction separately, the advertisers only need to specify their high-level objectives and constraints and the automated agents bid on behalf of the advertisers. A popular autobidding strategy is value maximization subject to a target return-on-investment (ROI) constraint, in which the 
autobidding agents aim to maximize their value, such as the number of clicks/conversions, subject to a minimum admissible ratio on the return-on-investment constraint. This type of bidders has been referred as {\em value maximizers} by the recent line of research on autobidding~\citep{deng2021towards,balseiro2021landscape}. 

As the behavior model of automated bidders is very different from the classic (quasi-linear) {\em utility maximizers} maximizing the difference between value and payment, there has been a growing body of literature revisiting the effectiveness of the existing mechanisms through the lens of the autobidding world~\citep{aggarwal2019autobidding,deng2021towards,balseiro2021landscape,rFPApaper,babaioff2020non}. In particular, it is well-known that running the second-price auction against utility maximizers results in socially optimal outcomes; however, \citet{aggarwal2019autobidding} show that the price of anarchy (PoA) \citep{koutsoupias1999worst}, the ratio between the worst welfare in equilibrium and the socially optimal welfare, of running the second-price auction against value maximizers is $1/2$ (when the bidders do not adopt dominated strategies\footnote{When discussing the second-price auction, we assume the bidders do not adopt dominated strategies. In the case when the bidders could adopt dominated strategies, the PoA of running the second-price auction can be $0$.}).

In addition to the change in bidding methods, there has been an industry-wide change in auction formats in recent years: shifting from the second-price auction to the first-price auction, especially in the display ad markets, culminating
in the move to the first-price auction from the Google Ad Exchange in September 2019 \citep{paes2020competitive}. The shift towards the first-price auction in combination with the increasing adoption of autobidding raises the question of understanding {\em the PoA of running the first-price auction in the autobidding world}.

\subsection{Our Results}

In this paper, we charaterize the PoA of running the first-price auction in the fully autobidding world, where all bidders are value maximizers, and in the mixed autobidding world, where bidders could be either value maximizers or utility maximizers. Our results complement the very recent result from \cite{JinL22} showing that the PoA of running the first-price auction is $1 - 1 / e^2$ in a world without autobidding, i.e., all bidders are utiltiy maximizers. The results are summarized in Table~\ref{tab:main_poa_result} and we also include the known results for second-price auctions for comparison.

\begin{table}[h]
    \centering
    \begin{tabular}{|c||c|c|}
       \hline
       Mechanism  & First-price auction & Second-price auction \\
       \hline
       \hline
       Full autobidding  & $1/2$ & $1/2$~\citep{aggarwal2019autobidding} \\
       \hline
       Mixed autobidding & $\min_{t \in [0, 1]} \frac{1 + t \ln t}{2 - t + t \ln t} \approx 0.457$ & $1/2$~\citep{BalseiroDMMZ21} \\
       \hline
       No autobidding & $1 - 1 / e^2 \approx 0.865$~\citep{JinL22} & $1$~\citep{vickrey1961counterspeculation} \\
       \hline
    \end{tabular}
    \caption{Summary of PoAs in second-price and first-price auctions under full autobidding, mixed autobidding, and no autobidding environments. We note that for the second-price auction, bidders are assumed to not adopt dominated strategies; otherwise the PoA can be $0$.}
    \label{tab:main_poa_result}
\end{table}

Our results demonstrate that in a full autobidding world, the PoA of running the second-price auction and the first-price auction is the same, i.e, $1/2$; however, in the mixed autobidding world with both utility maximizers and value maximizers, the PoA of running the first-price auction is strictly less than the PoA of running the second-price auction. It is worth highlighting that our PoA bound for the first-price auction in the mixed autobidding world involves solving an optimization problem $\min_{t \in [0, 1]} \frac{1 + t \ln t}{2 - t + t \ln t}$ and it turns out that this is the right bound as we are able to construct an instance matching this bound.

In addition to the above results, we further adopt the environment introduced by \citet{BalseiroDMMZ21}, where the seller can leverage machine-learned advice that approximates the buyers' values for designing mechanisms. \citet{BalseiroDMMZ21} show that as the advice gets more and more accurate in predicting the buyers' values, setting reserves based on the advice in the second-price auction makes the PoA approaching $1$. In our setting of using machine-learned advice in the first-price auction, we prove that setting reserves based on machine-learned advice is also effective in improving PoA in both full and mixed autobidding world, and the PoA approaches $1$ as the quality of the machine-learned advice increases. 

\subsection{Related Works}
For first-price auction with utility maximizers, the price of anarchy has been extensively studied. \citet{syrgkanis2013composable} first showed that the PoA is at least $1 - 1 / e$, which was later improved to $\approx 0.743$ by \citet{hoy2018tighter}. On the other hand, \citet{hartline2014price} gave a concrete instance to establish an upper bound of $\approx 0.869$. In a very recent paper, \citet{JinL22} show that the price of anarchy is exactly $1 - 1/e^2$. \citet{paes2020competitive} provide models and analyses to explain why Display Ads market shifts to use first-price auctions.

Motivated by the emergence of autobidding in online advertising, there is a recent line of work studying auction design for value maximizers in an autobidding world. \citet{aggarwal2019autobidding} design optimal bidding strategies in truthful auctions, show the existence of equilibrium, and prove welfare price of anarchy results. \citet{deng2021towards} and \citet{BalseiroDMMZ21} show how boosts and reserves can be used to improve the welfare efficiency guarantees given machine-learned advice approximating buyers' values. \citet{balseiro2021landscape,BalseiroDMMZ22} give the characterization of the revenue-optimal auctions under various information structure. \cite{Mehta22} and  \cite{rFPApaper} study how randomized and non-truthful mechanisms can help improve welfare efficiency for the case of two bidders. In particular, \cite{rFPApaper} show that the PoA of running the first-price auction is $1/2$ when restricting to equilibria of deterministic bids. In this work, we consider general equilibria with randomized bidding strategies.

\section{Preliminaries}

\paragraph{Ad auctions.}
Following prior work on autobidding, we consider a setting with $n$ bidders $[n]$ and $m$ auctions $[m]$, where each bidder participates in all auctions simultaneously. We generally use $i$ to index bidders, and $j$ to index auctions.
Each bidder $i$ has a value $v_{i, j} \geq 0$ for winning in each auction $j$.
We consider first-price auctions --- arguably the most common and well-known auction format --- as the auction mechanism throughout the paper.
That is, each bidder $i$ submits a bid $b_{i, j}$ in each auction $j$.
In each auction $j$, the bidder with the highest bid wins and is charged the winning bid as the payment.
For the existence of equilibrium in the first-price auction, we assume ties are broken in favor of the bidder with the higher value within an auction, which is a standard assumption in the literature \citep{maskin2000equilibrium}.\footnote{\citet{maskin2000equilibrium} assume ties are broken with a second-round second-price auction among the highest bidders to guarantee the existence of equilibrium in the presence of discrete values. This tie-breaking rule is equivalent to favoring the bidder with the higher value for standard utility maximizers, while the two tie-breaking rules can be different for value maximizers. We choose the one without the additional definition of second-round second-price auction for simplicity. 
}
If multiple bidders have the same value, then ties are broken in favor of the bidder with the smallest index.\footnote{Our positive results are oblivious to tiebreaking, i.e., they remain valid no matter what tiebreaking rule is used.}
We let $x_{i, j} = x_{i, j}(\{b_{i', j}\}_{i'})$ be the indicator variable indicating that $i$ wins in auction $j$.
Then, the value each bidder $i$ receives in auction $j$ is $\val_{i, j} = \val_{i, j}(\{b_{i', j}\}_{i'}) = x_{i, j}(\{b_{i', j}\}_{i'}) \cdot v_{i, j}$, and the payment $i$ makes in auction $j$ is $p_{i, j} = p_{i, j}(\{b_{i', j}\}_{i'}) = x_{i, j}(\{b_{i', j}\}_{i'}) \cdot b_{i, j}$.
We remark that bidders may employ randomized bidding strategies, in which case $\{b_{i, j}\}$, $\{x_{i, j}\}$, $\{\val_{i, j}\}$ and $\{p_{i, j}\}$ are all random variables.
Throughout the paper, we omit the dependence of $x_{i, j}$, $\val_{i, j}$ and $p_{i, j}$ on $\{b_{i', j}\}_{i'}$ whenever it is clear from the context.
For brevity, we write $\v_i = \{v_{i, j}\}_j$, $\b_i = \{b_{i, j}\}_j$, etc. 

\paragraph{Utility maximizers and value maximizers.}
We consider two types of bidders: traditional (quasi-linear) utility maximizers and (ROI-constrained) value maximizers.
In the rest of the paper, we use $B_u$ to denote the set of utility maximizers and $B_v$ the set of value maximizers, where we always have $B_u \cap B_v = \emptyset$ and $B_u \cup B_v = [n]$.
For each bidder $i$, let $\val_i = \sum_j \val_{i, j}$ be the total value $i$ receives from all auctions, and $p_i = \sum_j p_{i, j}$ be the total payment $i$ makes.
Fixing other bidders' bidding strategies, a utility maximizer $i$ bids in a way such that the total quasi-linear utility, i.e., $\val_i - p_i$, is maximized, without additional constraints.
On the other hand, a value maximizer $i$ bids in a way such that the total value $\val_i$ that $i$ receives is maximized, subject to the ROI constraint that the ratio between the total value $\val_i$ and the total payment $p_i$ is at most some predetermined threshold. Without loss of generality we assume this threshold is $1$, i.e., we require the total payment $p_i$ is at most the total value $\val_i$ for each value maximizer $i$.
Formally, fixing other bidders' bids, a value maximizer $i$ solves the following optimization problem to determine their bids $\{b_{i, j}\}_j$:
\begin{align*}
    \text{maximize} \qquad & \bE\left[\val_i\right]  \\
    \text{subject to} \qquad & \bE\left[\val_i - p_i\right] \ge 0.
\end{align*}
Here, the expectations are taken over the randomness introduced by bidders bidding randomly.
We note that when deciding how to bid, $i$ can observe the distributions, but not the realizations, of the other bidders' bids.

\paragraph{Equilibria and the price of anarchy.}
The focus of this paper is on the efficiency of the first-price auction in equilibrium.
That is, when each bidder $i$'s bidding strategy $\b_i$ is optimal for that bidder's objective, given all other bidders' bidding strategies $\{\b_{i'}\}_{i' \ne i}$ (or $\b_{-i}$ for brevity).
In other words, each bidder's strategy is a best response to other bidders' strategies.
Note that since each bidder $i$ cannot observe the realizations of other bidders' bids, all bidders' bids $\{\b_i\}_i$ must be independent.
The way we measure efficiency is through the classical notion of the price of anarchy (PoA) \citep{koutsoupias1999worst}, i.e., the ratio between the welfare resulting from the worst equilibrium possible, and the optimal welfare.
Formally, given a problem instance of ($n$, $m$, $B_u$, $B_v$, $\{v_{i, j}\}$), we are interested in the instance-wise PoA defined as follows:
\[
    \poa(n, m, B_u, B_v, \{v_{i, j}\}) = \inf_{\{\b_i\}_i~\text{form an equilibrium}} \frac{\sum_i \bE[\val_i]}{\sum_j \max_i v_{i, j}}.
\]
Our goal is to pin down the PoA of the first-price auction, which is the infimum of the instance-wise PoA taken over all instances of the problem.
We assume $\sum_j \max_i v_{i, j} > 0$ in order for the PoA to be well-defined.

\section{PoA of the First-price Auction in a Full Autobidding World}

In the literature~\citep{aggarwal2019autobidding,deng2021towards,BalseiroDMMZ21}, the analyses of PoA for the second-price auction in a full autobididng world mainly leverage the fact that the second-price auction is truthful for utility maximizers, a nice property implying that value maximizers adopt {\em uniform bidding} strategy in equilibrium~\citep{aggarwal2019autobidding}. Here, a uniform bidding strategy for a value maximizer $i$ is a strategy in which she bids $k_i \cdot v_{i,j}$ with a constant bid multiplier $k_i$ in each auction $j$. However, as the first-price auction is not truthful for utility maximizers, a uniform bidding strategy is generally no longer a best response for a value maximizer. A value maximizer $i$ may adopt a {\em non-uniform} bidding strategy using different bid multiplier $k_{i,j}$ in different auction $j$, and the bid multipliers can even be {\em randomized}.

\cite{rFPApaper} show that the PoA of running the first-price auction in a full autobidding world is $1/2$ when restricting to equilibria of bidding in a determistic way.
In this section, we devise a novel argument to analyze equilibria with possibly randomized non-uniform bidding strategies to show that the PoA of the first-price auction in a full autobidding world is exactly $1/2$.

\begin{theorem}
\label{thm:value}
    When all bidders are value maximizers, the PoA of the first-price auction in a full autobidding world is $1/2$.
    Formally,
    \[
        \inf_{n, m, \{v_{i, j}\}} \poa(n, m, B_u = \emptyset, B_v = [n], \{v_{i, j}\}) = 1/2.
    \]
\end{theorem}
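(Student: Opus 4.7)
The plan is to establish the matching bounds $\poa \geq 1/2$ in every instance (the substantive direction) and $\inf_{\text{instances}} \poa \leq 1/2$ via a construction, using a two-inequality argument that pairs a deviation bound with a revenue bound.

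For the lower bound, I would fix any equilibrium of any instance. Let $v^*_j := \max_i v_{i,j}$, let $i^*_j$ denote the optimal winner of auction $j$, and let $F_j$ denote the CDF of $\max_{i' \neq i^*_j} b_{i',j}$ under the equilibrium. First, for each bidder $i$, consider the deviation of bidding $v_{i,j}$ in every auction $j$ with $i = i^*_j$ and $0$ elsewhere. Since the first-price auction charges the bid, this deviation has expected payment equal to expected value, so it is ROI-feasible; being a best response among ROI-feasible strategies, the equilibrium gives $\bE[\val_i]$ at least as large as the deviation's expected value. Summing over $i$,
\[
    W := \sum_i \bE[\val_i] \;\geq\; \sum_j v^*_j \, F_j(v^*_j).
\]
Second, summing the per-bidder ROI constraints gives $W \geq \sum_j \bE[B_j]$, where $B_j$ is the winning bid in auction $j$. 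Whenever some bidder other than $i^*_j$ bids strictly above $v^*_j$ in auction $j$ (an event of probability $1 - F_j(v^*_j)$), we have $B_j \geq v^*_j$, and hence $\bE[B_j] \geq v^*_j (1 - F_j(v^*_j))$, so
\[
    W \;\geq\; \sum_j v^*_j \bigl(1 - F_j(v^*_j)\bigr).
\]
Adding the two inequalities gives $2W \geq \sum_j v^*_j = W^*$, the socially optimal welfare, and so $\poa \geq 1/2$.

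For the upper bound, I would use $n = 2$ bidders and $m$ auctions, with bidder $1$'s value vector $(m-1, 0, \ldots, 0)$ and bidder $2$'s value vector $(1, 1, \ldots, 1)$. The socially optimal welfare is $2m - 2$. Consider the profile where bidder $1$ bids $0$ everywhere and bidder $2$ bids $m - 1 + \eps$ in auction $1$ and $0$ in auctions $2$ through $m$; bidder $2$ wins every auction, using higher-value tie-breaking in auctions $2$ through $m$. Bidder $2$'s total payment $m - 1 + \eps \leq m$ respects ROI, and bidder $1$ cannot profitably deviate to win auction $1$ without bidding strictly above their own value $m - 1$, which would violate ROI. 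The welfare is $m$, so the PoA ratio is $m/(2m-2)$, which tends to $1/2$ as $m \to \infty$.

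The main conceptual step is picking the right deviation in the lower bound: each bidder bids truthfully only on the auctions they would win in the optimal allocation. This choice is calibrated so that expected value and expected payment coincide exactly, making the deviation ROI-feasible at no slack, while the total welfare picked up over bidders is weighted precisely by $F_j(v^*_j)$, which is complementary to the $1 - F_j(v^*_j)$ appearing in the revenue-based inequality. Once this deviation is chosen, the remaining calculations are routine.
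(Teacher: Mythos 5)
Your proof is correct and takes essentially the same approach as the paper: the lower bound uses the same truthful-bidding deviation for the rightful winner (the paper has the proxy bid truthfully in all auctions rather than only those it would win optimally, but this makes no difference since you only credit value from the rightful-winner auctions) together with the ROI-implied revenue bound, and the upper bound is a different but equally valid hard instance. The paper's hard instance uses just $m=2$ auctions, which is marginally simpler than your $m\to\infty$ family, but otherwise the arguments align.
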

\begin{proof}
    First we provide a simple hard instance, which implies an upper bound of $1/2$ on the PoA.
    Consider an instance with $n = 2$ bidders and $m = 2$ auctions, where both bidders are value maximizers.
    We set $v_{1, 1} = 1$, $v_{2, 2} = 1 - \eps$, and $v_{1, 2} = v_{2, 1} = 0$, where $\eps > 0$ is a small quantity to be fixed later.
    Consider the following deterministic bidding strategies: $b_{1, 1} = b_{2, 1} = 0$ (so bidder $1$ wins in auction $1$ by tiebreaking and pays $0$), $b_{1, 2} = 1$, and $b_{2, 2} = 0$ (so bidder $1$ wins in auction $2$ and pays $1$).
    First we verify this is in fact an equilibrium.
    Observe that the ROI constraints of both bidders are satisfied.
    Bidder $1$ does not want to deviate from $\b_1$, since bidder $1$ already gets the maximum value possible, i.e., $1$.
    Bidder $2$ would ideally want to bid higher (possibly randomly) in auction $2$ in order to win with some probability and get positive value.
    However, given $b_{1, 2} = 1$, bidder $2$ would have to bid at least $1$ in order to win, which means $p_{2, 2} \ge 1$ whenever $x_{2, 2} = 1$.
    This would violate bidder $2$'s ROI constraint.
    So both bidders' strategies are best responses, and therefore form an equilibrium.
    On the other hand, the optimal welfare is $v_{1, 1} + v_{2, 2} = 2 - \eps$, and the welfare resulting from the above equilibrium is only $v_{1, 1} = 1$, which means the PoA is at most $1/(2 - \eps)$.
    Letting $\eps \to 0$ yields an upper bound of $1/2$.
    \medskip

    Now we focus on lower bounding the instance-wise PoA fixing any $n$, $m$, and $\{v_{i, j}\}$.
    Consider any equilibrium, induced by bidding strategies $\{\b_i\}_i$.
    Our task is to show that
    \[
        2 \sum_i \bE[\val_i] \ge \sum_j \max_i v_{i, j},
    \]
    where the expectations, as well as all expectations in the rest of the proof, are taken over $\{\b_i\}_i$.
    For each auction $j$, let $\rw(j) = \argmax_i v_{i, j}$ be the ``rightful winner'' of auction $j$, where ties are broken in favor of the bidder with the smaller index.
    Now fix a bidder $i$, and consider the following proxy bidding strategy $\b'_i$ of bidder $i$: $b'_{i, j} = v_{i, j}$ in each auction $j$, i.e., the proxy bidding strategy $\b'_i$ is a strategy in which bidder $i$ bids truthfully.
    Then, for each auction $j$ where $i = \rw(j)$, the probability that $i$ wins in auction $j$ under the proxy bidding strategy $\b'_{i}$, fixing all other bidders' strategies $\b_{-i}$, is
    \[
        \Pr_{\b_{-i}}[x_{i, j}(b'_{i,j}, \b_{-i,j}) = 1] = \Pr_{\b_{-i}}\left[\max_{i' \ne i} b_{i',j} \le b'_{i, j}\right] = \Pr_{\b_{-i}}\left[\max_{i' \ne i} b_{i', j} \le v_{i, j}\right].
    \]
    Also, observe that $i$'s ROI constraint is satisfied under the proxy bidding strategy $\b'_i$, since under the proxy bidding strategy $\b'_i$, the following holds deterministically:
    \[
        \val_i(\b'_{i}, \b_{-i}) - p_i(\b'_{i}, \b_{-i}) = \sum_j x_{i, j}(b'_{i,j}, \b_{-i,j}) \cdot (v_{i, j} - b'_{i, j}) = \sum_j x_{i, j}(b'_{i,j}, \b_{-i,j}) \cdot (v_{i, j} - v_{i, j}) = 0.
    \]
    Taking the expectation over $\b_{-i}$, we see that $i$'s ROI constraint is satisfied (and is in fact binding).
    The above observations imply the following fact: There exists a bidding strategy (i.e., $\b'_i$), under which $i$'s ROI constraint is satisfied, and the expected value $i$ receives is
    \[
        \bE[\val_i(\b'_{i}, \b_{-i})] \ge \sum_{j: i = \rw(j)} \Pr_{\b_{-i}}[x_{i, j}(b'_{i,j}, \b_{-i,j}) = 1] \cdot v_{i, j} = \sum_{j: i = \rw(j)} \Pr_{\b_{-i}}\left[\max_{i' \ne i} b_{i', j} \le v_{i, j}\right] \cdot v_{i, j}.
    \]
    Observe that since $i$ is a value maximizer and $\b_i$ is a best response to $\b_{-i}$, we must have
    \[
        \bE[\val_i(\b_{i}, \b_{-i})] \ge \bE[\val_i(\b'_{i}, \b_{-i})] \ge \sum_{j: i = \rw(j)} \Pr_{\b_{-i}}\left[\max_{i' \ne i} b_{i', j} \le v_{i, j}\right] \cdot v_{i, j}.
    \]
    On the other hand, fix a bidder $i$ and consider the total payment collected in any auction $j$ with $\rw(j) = i$, under $(\b_i, \b_{-i})$.
    By the properties of the first-price auction, we have
    \begin{align*}
        \sum_{i'} \bE[p_{i', j}(b_{i,j}, \b_{-i,j})] & = \bE\left[\max_{i'} b_{i', j}\right] \ge \bE\left[\max_{i' \ne i} b_{i', j}\right] \\
        & \ge \Pr_{\b_{-i}}\left[\max_{i' \ne i} b_{i', j} > v_{i, j}\right] \cdot v_{i, j} = \left(1 - \Pr_{\b_{-i}}\left[\max_{i' \ne i} b_{i', j} \le v_{i, j}\right]\right) \cdot v_{i, j}.
    \end{align*}
    Summing over all auctions $j$ with $\rw(j) = i$, we have
    \[
        \sum_{i', j: i = \rw(j)} \bE[p_{i', j}(\b_{i}, \b_{-i})] \ge \sum_{j: i = \rw(j)} \left(1 - \Pr_{\b_{-i}}\left[\max_{i' \ne i} b_{i', j} \le v_{i, j}\right]\right) \cdot v_{i, j}.
    \]
    Combining this with the bound on the total expected value $i$ receives, we get
    \[
       \bE[\val_i(\b_{i}, \b_{-i})] + \sum_{i', j: i = \rw(j)} \bE[p_{i', j}(\b_{i}, \b_{-i})] \ge \sum_{j: i = \rw(j)} v_{i, j}.
    \]
    Further summing over $i$ (and omitting dependence on $(\b_{i}, \b_{-i})$), the above gives
    \[
        \sum_i \bE[\val_i] + \sum_i \bE[p_i] \ge \sum_j \max_i v_{i, j}.
    \]
    Finally, observe that for any bidder $i$, the ROI constraint implies $\bE[p_i] \le \bE[\val_i]$, and
    therefore we can conclude the proof with
    \[
        2 \sum_i \bE[\val_i] \ge \sum_i \bE[\val_i] + \sum_i \bE[p_i] \ge \sum_j \max_i v_{i, j}. \qedhere
    \]
\end{proof}

We remark that the proof of Theorem~\ref{thm:value} in fact establishes the following claim, which will be a building block in our analysis of PoA of the first-price auction in a mixed autobidding world with both value maximizers and utility maximizers.

\begin{lemma}
\label{lem:value}
    For any $n$, $m$, $B_u$, $B_v$, and $\{v_{i, j}\}$, if $\{\b_i\}_i$ form an equilibrium, then
    \[
        \sum_{i \in B_v} \bE[\val_i] + \sum_{i, j: \rw(j) \in B_v} \bE[p_{i, j}] \ge \sum_{j: \rw(j) \in B_v} \max_i v_{i, j}.
    \]
\end{lemma}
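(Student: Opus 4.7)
The plan is to observe that the argument in Theorem~\ref{thm:value}'s proof never uses the fact that \emph{all} bidders are value maximizers; it only uses that the particular bidder $i$ to whose incentives we appeal (via the truthful proxy bid $\b'_i$) is a value maximizer. So I will simply re-run that argument, but apply the proxy-deviation step only to bidders $i \in B_v$, and then restrict the final summation to those bidders.

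Concretely, for each $i \in B_v$, I define the same proxy bidding strategy $b'_{i,j} = v_{i,j}$. As in the theorem's proof, $\val_i(\b'_i, \b_{-i}) - p_i(\b'_i, \b_{-i}) = 0$ holds deterministically, so the ROI constraint of $i$ is satisfied under $\b'_i$; hence since $i$ is a value maximizer and $\b_i$ is a best response, $\bE[\val_i] \ge \bE[\val_i(\b'_i,\b_{-i})]$. Keeping only those terms corresponding to auctions $j$ with $\rw(j) = i$ yields
\[
    \bE[\val_i] \;\ge\; \sum_{j:\,\rw(j)=i} \Pr_{\b_{-i}}\!\left[\max_{i'\ne i} b_{i',j} \le v_{i,j}\right] \cdot v_{i,j}.
\]
Independently, for each such auction $j$, the first-price payment bound from the theorem gives
\[
    \sum_{i'} \bE[p_{i',j}] \;\ge\; \left(1 - \Pr_{\b_{-i}}\!\left[\max_{i'\ne i} b_{i',j} \le v_{i,j}\right]\right) \cdot v_{i,j},
\]
and this step uses no assumption on the type of $i$ or of the other bidders---only that the mechanism is the first-price auction. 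Adding the two inequalities auction by auction makes the probability terms cancel, producing
\[
    \bE[\val_i] + \sum_{i',\,j:\,\rw(j)=i} \bE[p_{i',j}] \;\ge\; \sum_{j:\,\rw(j)=i} v_{i,j} \;=\; \sum_{j:\,\rw(j)=i} \max_{i''} v_{i'',j}.
\]

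Finally I sum this inequality over $i \in B_v$. The auctions $\{j : \rw(j) \in B_v\}$ are partitioned by the identity of their (unique) rightful winner $\rw(j) \in B_v$, so there is no double counting: each such auction contributes its payments to exactly one term and its value $\max_i v_{i,j}$ once on the right-hand side. This delivers exactly the inequality stated in Lemma~\ref{lem:value}. The only subtlety worth flagging is this bookkeeping---confirming that restricting the outer sum to value maximizers correctly enumerates the auctions $\{j : \rw(j) \in B_v\}$ on both sides---and there is no new idea beyond what the proof of Theorem~\ref{thm:value} already supplies.
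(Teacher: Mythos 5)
Your proposal is correct and is precisely the argument the paper intends: the paper explicitly introduces Lemma~\ref{lem:value} as the claim that "the proof of Theorem~\ref{thm:value} in fact establishes," and you re-run that proof's deviation-to-truthful-bidding and first-price-payment bounds for bidders $i \in B_v$ only, then sum using the observation that the auctions $\{j : \rw(j) \in B_v\}$ are partitioned by their rightful winner. No gaps; the bookkeeping you flag is exactly the only thing that needs checking.
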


We also remark that it is not hard to generalize Theorem~\ref{thm:value} to the multi-slot setting, where multiple slots are sold in each auction, i.e., the generalized first-price auction.
In the multi-slot setting, the click-through rates of the slots are different, and the value that a bidder $i$ receives when winning a slot in an auction $j$ is the product of $v_{i, j}$ and the click-through rate of that particular slot. One can show that the PoA of the generalized first-price auction in a full autobiding world is also $1/2$, by adapting the proof of Theorem~\ref{thm:value}.

\section{PoA of the First-price Auction in a Mixed Autobidding World}

In this section, we proceed to the main result of the paper: characterizing the exact PoA of running the first-price auction in a mixed autobidding world with both utility maximizers and value maximizers participating in the auctions.
The entire section is devoted to the proof of the following claim.

\begin{theorem}
\label{thm:main}
    The PoA of the first-price auction in a mixed autobidding world is given by:
    \[
        \inf_{n, m, B_u, B_v, \{v_{i, j}\}} \poa(n, m, B_u, B_v, \{v_{i, j}\}) = \min_{t \in [0, 1]} \frac{1 + t \ln t}{2 - t + t \ln t} \approx 0.457.
    \]
\end{theorem}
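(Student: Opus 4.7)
The plan is to establish matching upper and lower bounds. Since the claim is an equality, both directions are nontrivial; I expect the construction to be parametric in $t$ and to mirror the analytical technique used in the lower bound.

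For the upper bound (a tight hard instance), I would construct, for each $t \in (0,1)$, an instance whose PoA is $\frac{1 + t \ln t}{2 - t + t \ln t}$, and then take the infimum over $t$. The presence of $\ln t$ strongly suggests a mixed-strategy equilibrium with bid density proportional to $1/b$ on some interval (logarithmic integrals produce such terms). A natural template is one value maximizer that is the rightful winner on a subset of auctions accounting for a $(1-t)$-fraction of $W^* := \sum_j \max_i v_{i,j}$, coexisting with utility maximizers that are rightful winners on auctions accounting for the remaining $t$-fraction, with values tuned so that in equilibrium the value maximizer's aggressive bidding on the utility-maximizer auctions forces those utility maximizers into a mixed strategy producing the $t \ln t$ shortfall. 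I would verify the equilibrium conditions (ROI binding for the value maximizer, best-response for each utility maximizer) and compute the ratio between equilibrium welfare and $W^*$.

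For the lower bound, I would partition the optimal welfare as $W^* = W^*_u + W^*_v$ with $W^*_u = \sum_{j:\,\rw(j)\in B_u} \max_i v_{i,j}$ and $W^*_v = \sum_{j:\,\rw(j)\in B_v} \max_i v_{i,j}$, and let $t = W^*_u / W^*$ (so $W^*_v = (1-t) W^*$). Then:
\begin{enumerate}
    \item Apply Lemma~\ref{lem:value} directly to control the value-maximizer side, giving $\sum_{i \in B_v} \bE[\val_i] + \sum_{i, j:\,\rw(j) \in B_v} \bE[p_{i,j}] \ge (1-t) W^*$.
    \item For each utility maximizer $i$ that is a rightful winner on auctions $J_i = \{j : \rw(j) = i\}$, construct a randomized deviation $\b'_i$ in which $b'_{i,j}$ is drawn from a distribution (parametrized, I expect, by the same $t$) supported on $[\alpha v_{i,j}, v_{i,j}]$ with density proportional to $1/b$; using the best-response property, lower bound $\bE[\val_i - p_i]$ by the expected utility under this deviation. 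The log-density is exactly what produces the $t \ln t$ term after integrating against the CDF of the opponents' maximum bid.
    \item Combine these two inequalities together with the trivial identities relating total equilibrium welfare $W_{\mathrm{eq}} = \sum_i \bE[\val_i]$, total payments $\sum_i \bE[p_i]$, and the ROI bound $\bE[p_i] \le \bE[\val_i]$ for $i \in B_v$. The bounds on opposing-bid CDFs cancel between the value-maximizer payment term (an upper bound via first-price revenue) and the utility-maximizer deviation term (a lower bound via winning probability), which is why the construction of the deviation must be chosen precisely to match Lemma~\ref{lem:value}.
    \item Solve the resulting linear inequality for $W_{\mathrm{eq}}/W^*$, which, after using the worst-case $t$, yields $\frac{1 + t \ln t}{2 - t + t \ln t}$.
\end{enumerate}

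The main obstacle will be step~2: finding the right randomized deviation for utility maximizers so that, when its guarantee is added to Lemma~\ref{lem:value}, the terms involving the opponents' bid distributions telescope cleanly. In the pure utility-maximizer setting of \citet{JinL22} the analogous deviation yields $1 - 1/e^2$; here the deviation must be weakened (supported on a smaller interval $[\alpha, 1]$) because on auctions where the rightful winner is a value maximizer, Lemma~\ref{lem:value} already extracts welfare, and one must avoid double-counting. The parameter $\alpha$ of the deviation will end up being a function of $t$, and optimizing it is what produces the equation whose solution gives the critical point of $\frac{1+t\ln t}{2-t+t\ln t}$. Matching the upper-bound construction exactly will confirm tightness.
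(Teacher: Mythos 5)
Your high-level plan — combine Lemma~\ref{lem:value} on the value-maximizer side with a separate estimate on the utility-maximizer side, optimize over the split $t = W^*_u / W^*$, and match with a parametric hard instance — is the right shape and does align with the paper's decomposition into the four quantities $A, B, C, D$ (Lemma~\ref{lem:combination}). Your intuition that the $\ln t$ term comes from a $1/b$-type density is also on target. But the core technical step, your step 2, takes a genuinely different route from the paper, and I don't think it works as written.

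The paper does \emph{not} use a smoothness-style randomized-deviation argument for utility maximizers. Instead it proves a \emph{structural} per-auction tradeoff (Lemma~\ref{lem:local}): for an auction $j$ where the rightful winner $i$ is a utility maximizer, if $\bE[\val_{i,j}] = x\cdot v_{i,j}$ then $\sum_{i'\ne i}\bE[p_{i',j}] \ge (1-x+x\ln x)\cdot v_{i,j}$, where $x = \bE[F(b_{i,j})]$ and $F$ is the CDF of the highest opposing bid. The mechanism behind this is a first-order condition (Lemma~\ref{lem:marginal}) on the concave envelope $H$ of the value-payment frontier: at a best response, $H$ has slope at most $1$ to the right of the equilibrium point, which pointwise bounds $F(t) \le \frac{(v-b)F(b)}{v-t}$ for $t\ge b$ and hence lower-bounds the expected payment by others as an integral of $1-F$. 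The $\ln$ arises from integrating $\frac{1}{v-t}$, not from a chosen deviation density. Crucially, the bound is parametrized by the \emph{equilibrium} quantity $x$, not by a deviation parameter $\alpha$, which is what allows the per-auction estimates to be aggregated exactly via Jensen's inequality and then fed into the $\max\{A, C+D\}+B$ combination. A deviation argument with a fixed density on $[\alpha v, v]$ would give a bound of the form $\bE[\val_{i,j}]+\sum_{i'\ne i}\bE[p_{i',j}] \ge \lambda(\alpha)\cdot v$ with $\lambda$ independent of the realized $x$; optimizing that would recover something like the $1-1/e$ smoothness constant, not the $x$-dependent tradeoff needed to reach $0.457$. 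Also, the ``telescoping'' you anticipate between Lemma~\ref{lem:value} and the utility-maximizer bound cannot happen, because the two lemmas concern disjoint sets of auctions (those with value-maximizer rightful winners vs.\ utility-maximizer rightful winners), so the opposing-bid CDFs are over different auctions and never meet; the paper instead combines them at the level of aggregate quantities through the $\max$ in Lemma~\ref{lem:combination}.

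On the upper-bound instance, your template has the roles reversed relative to the paper's $n=m=2$ construction ($B_u=\{1\}$, $B_v=\{2\}$, $v_{1,1}=1$, $v_{2,2}=1-t+t\ln t$): there it is the \emph{value maximizer} who mixes (with CDF $\min\{1, t/(1-b)\}$) in the utility maximizer's auction after winning its own auction for free, and the utility maximizer bids $0$ deterministically. So the ``$1/b$-type'' distribution belongs to the value maximizer's strategy, and the $t\ln t$ loss is the value maximizer's expected payment, not a shortfall created by mixing utility maximizers. Your construction idea is close in spirit but would need this flipped to actually verify the equilibrium conditions (the utility maximizer's indifference, and the value maximizer's ROI constraint being exactly binding at $v_{2,2}=1-t+t\ln t$).
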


\paragraph{Roadmap.}
In the rest of the section, we first establish a PoA lower bound.
At first sight, one might be tempted to adapt the proof of Theorem~\ref{thm:value}, hoping to get the same lower bound of $1/2$ in the presence of utility maximizers.
In doing so, one faces the following intuitive difficulty: for a utility maximizer $i$, the value $i$ receives when best responding is not necessarily lower bounded by the value $i$ receives when using the proxy bidding strategy $b'_{i, j} = v_{i, j}$.
This is because aiming to maximize utility, $i$ may bid in a way such that both the value $i$ receives and the payment $i$ makes are smaller, but their difference (which is the utility $i$ gets) is larger.
One simple fix is to consider instead the following proxy strategy: $b''_{i, j} = v_{i, j} / 2$.
Adapting the proof of Theorem~\ref{thm:value}, it is not hard to show this alternative proxy strategy gives a PoA lower bound of $1/4$.
However, it is unlikely that such a fix would yield the tight ratio of about $0.457$.

In order to overcome this obstacle, we present a novel local analysis (Lemma~\ref{lem:local}) for the first-price auction that handles auctions where the rightful winner is a utility maximizer, which roughly says that if the expected value that the rightful winner receives from such an auction is small, then the expected payment that other bidders make in the same auction must be large.
This argument relies on a characterization of the structure of the best response of a utility maximizer.
Given the local analysis, one can already get a lower bound of about $0.317$ on the PoA by exploiting the tradeoff between value and payment.

To obtain the right ratio, we further combine the local analysis for utility maximizers with the bound for value maximizers (Lemma~\ref{lem:value}) discussed earlier.
The idea is to consider $4$ quantities separately: ($A$) the total value that all value maximizers receive, ($B$) the total value that all utility maximizers receive from auctions where they are the rightful winner, ($C$) the total payment made in auctions where the rightful winner is a value maximizer, and ($D$) the total payment made in auctions where the rightful winner is a utility maximizer, by bidders who are not the rightful winner.

One can show that the welfare in equilibrium is lower bounded by a very specific combination of these $4$ quantities (Lemma~\ref{lem:combination}).
Moreover, Lemma~\ref{lem:value} provides a tradeoff between ($A$) and ($C$), and the local analysis (Lemma~\ref{lem:local}) provides a tradeoff between ($B$) and ($D$).
Therefore, one can lower bound the PoA by solving an optimization problem involving a few number of parameters to get the worst-case tradeoffs. We show that this optimization gives the right ratio of about $0.457$ (Lemma~\ref{lem:ub}). Finally, we present a problem instance where there exists an equilibrium where the gap between the actual welfare and the optimal welfare is precisely the right ratio (Lemma~\ref{lem:lb}).
This, together with the matching lower bound, concludes the proof of Theorem~\ref{thm:main}.

\subsection{A Local Analysis}

In this subsection, we propose a local analysis of the first-price auction to analyze the tradeoff between welfare and payment in equillibrium when the rightful winner of the auction is a utility maximizer. Although the analysis is mainly designed for utility maximizers towards establishing the lower bound of PoA, it also works when the rightful winner of the auction is a value maximizer. For the remainder of this subsection, we fix $n$, $m$, $\{v_{i, j}\}$, and an equilibrium profile $\{\b_i\}_i$, and consider any fixed bidder $i$ and auction $j$ where $i$ is the rightful winner. For brevity, we also let $v = v_{i, j}$.

\paragraph{Value-payment frontiers.}
We first try to understand the maximum expected value $i$ can get in auction $j$ when paying a particular amount in expectation.
Let $F$ be the CDF of the highest other bid, i.e., for any $x \ge 0$,
\[
    F(x) = \Pr_{\b_{-i}}\left[\max_{i' \ne i} b_{i', j} \le x\right].
\]
Note that in the first-price auction $j$, if bidder $i$ bids $t$, the bidder $i$ pays $t \cdot F(t)$ and earns value $v \cdot F(t)$. For any $x \ge 0$, consider a function $G: \bR_+ \to [0, v]$ such that
\[
    G(x) = \sup_{t \ge 0:~t \cdot F(t) \le x} v \cdot F(t).
\]
In other words, $G$ maps a desired expected payment $x$ to the maximum (or strictly speaking, supremum) value $i$ can get by bidding deterministically, when paying at most the desired payment $x$ in expectation.
By bidding randomly, $i$ can further achieve any value-payment pair that is a convex combination of points on $G$.
In light of this, we further consider the {\em concave envelope $H$ of $G$}, which captures the value-payment frontier when $i$ can bid randomly.
Note that for utility maximizers, although there always exists a deterministic best response (since all bids with a positive probability in a randomized bidding strategy must result in the same quasi-linear utility), the equilibrium bidding strategies may still be randomized for the sake of stability.

\paragraph{Characterizing the best response.}
Given the definition of the value-payment frontier $H$, one can further observe that when best responding, $i$'s expected utility $\bE[\val_{i, j} - p_{i, j}]$ depends only on $H$.
In fact, intuitively, since $H$ is concave, $i$'s expected value and payment in auction $j$ must correspond to the ``rightmost'' point on $H$ where the derivative is larger than or equal to $1$ (this is informal since (1) the point may not be unique and (2) the derivative may be undefined).
Formally, when $i$ is a utility maximizer, $i$'s best response in $j$, $b_{i, j}$, has the following property.

\begin{lemma}
\label{lem:marginal}
    For any $b \ge 0$ in the support of $b_{i, j}$, we have for any $w \ge b \cdot F(b)$, $H(w) - H(b \cdot F(b)) \le w - b \cdot F(b)$.
\end{lemma}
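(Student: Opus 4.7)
The plan is to show that $b \cdot F(b)$ attains the maximum of the concave function $\phi(x) := H(x) - x$; concavity of $\phi$ then immediately yields $\phi(w) \le \phi(b \cdot F(b))$ for every $w \ge b \cdot F(b)$, which after rearranging is exactly the claimed inequality. The driving intuition is that $H(x) - x$ is the expected utility achievable in auction $j$ by a randomized bidding strategy that pays $x$ in expectation, so its maximum over $x$ captures the best utility that bidder $i$ can extract from $j$.

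First I would identify $\max_{x \ge 0} (H(x) - x)$ with the deterministic-bidding optimum $U^* := \sup_{t \ge 0} (v - t) F(t)$. Since $H \ge G$ and $G(t F(t)) \ge v F(t)$, evaluating at $x = t F(t)$ gives one direction. For the other, I would use the representation of a point on the concave envelope as a convex combination of points on the graph of $G$: writing $H(x_0) = \sum_k \lambda_k G(x_k)$ with $x_0 = \sum_k \lambda_k x_k$, linearity gives $H(x_0) - x_0 = \sum_k \lambda_k (G(x_k) - x_k) \le \max_k (G(x_k) - x_k)$, and each $G(x_k) - x_k$ is in turn bounded by $(v - t_k) F(t_k) \le U^*$ for any $t_k$ realizing (or approximating) $G(x_k)$, since $G(x_k) \le v F(t_k)$ and $t_k F(t_k) \le x_k$.

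Next I would use that $i$ is a utility maximizer best-responding with the (possibly randomized) strategy $b_{i,j}$: a standard mass-shifting argument forces every $b$ in the support of $b_{i,j}$ to satisfy $(v - b) F(b) = U^*$, since otherwise moving probability away from $b$ toward a strict improver would strictly increase expected utility. Combining this with $H(b \cdot F(b)) \ge G(b \cdot F(b)) \ge v F(b)$ and the equality $\max_x (H(x) - x) = U^*$ from the previous step, I obtain $H(b \cdot F(b)) - b \cdot F(b) = U^*$, so $b \cdot F(b)$ is a maximizer of $\phi$. Concavity of $\phi$ then implies that $\phi$ is non-increasing to the right of any maximizer, which is precisely the lemma.

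I expect the main obstacle to be the bookkeeping around the identification $\max_x (H(x) - x) = U^*$ and the mass-shifting argument when $F$ has atoms, i.e., when the highest other bid ties with $b$ with positive probability. These cases involve tiebreaking and the precise measure-theoretic meaning of ``support'', but they should be addressable either by a standard limiting/regularization argument or by carefully noting that an atom of $F$ at $b$ does not change the qualitative location of the maximum of $\phi$ relative to $b \cdot F(b)$.
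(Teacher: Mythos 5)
Your proof is correct but follows a genuinely different route from the paper's. The paper argues by a one-shot contradiction: if some $w > b \cdot F(b)$ violated the inequality, a strategy realizing the frontier point $(w, H(w))$ would give utility $H(w) - w > H(b \cdot F(b)) - b \cdot F(b) \ge (v-b) F(b)$, strictly improving on the deterministic bid $b$ — impossible since $b$ lies in the support of a best response. You instead prove the stronger structural fact that $b \cdot F(b)$ is a \emph{global} maximizer of $\phi(x) = H(x) - x$, by first identifying $\sup_x \phi(x)$ with the best deterministic utility $U^* = \sup_t (v-t) F(t)$ through the convex-combination representation of the concave envelope, and then invoking the best-response property to conclude $(v-b)F(b) = U^*$, hence $\phi(b F(b)) \ge v F(b) - b F(b) = U^*$. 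Your route is longer but isolates the identity $\sup_x(H(x)-x) = \sup_t(v-t)F(t)$ as an explicit intermediate, which is exactly what the paper silently relies on a few lines later when it asserts $H(b \cdot F(b)) = G(b \cdot F(b)) = v \cdot F(b)$ inside the proof of Lemma~\ref{lem:local}; so your approach buys a cleaner bookkeeping of where best-response optimality is actually used. Two small points. First, once you know $b \cdot F(b)$ is a global maximizer, concavity is not needed for the final step — $\phi(w) \le \phi(b \cdot F(b))$ holds for every $w$ automatically; concavity earns its keep only in the envelope decomposition. Second, in the bound on $G(x_k) - x_k$, you wrote $G(x_k) \le v F(t_k)$, but since $t_k$ only approximates the supremum defining $G(x_k)$ this should read $G(x_k) \le v F(t_k) + \eps$, with $\eps \to 0$ at the end; the conclusion $G(x_k) - x_k \le U^*$ is unaffected. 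The measure-theoretic caveats you flag (atoms of $F$, the meaning of ``support'') are shared with the paper's own argument, which likewise assumes that every bid in the support of a best response is itself a best response.
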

\begin{proof}
    Suppose towards a contradiction that there exists $w > b \cdot F(b)$ such that $H(w) - H(b \cdot F(b)) > w - b \cdot F(b)$, or equivalently, $H(w) - w > H(b \cdot F(b)) - b \cdot F(b)$.
    Let $b'_{i, j}$ be a possibly randomized bidding strategy that realizes the point $(w, H(w))$, i.e., when using $b'_{i, j}$, $i$ receives value $H(w)$ and pays $w$ in expectation in auction $j$.
    Then, the utility that $i$ receives from $j$ when using the new strategy $b'_{i, j}$ is $H(w) - w$.
    On the other hand, consider the deterministic bidding strategy $b''_{i, j} = b$.
    Since $b$ is in the support of $b_{i, j}$, $b''_{i, j}$ must also be a best response.
    However, under $b''_{i, j}$, $i$ receives utility $H(b \cdot F(b)) - b \cdot F(b)$, which is strictly worse than the utility $i$ receives when using the alternative strategy $b'_{i, j}$. In other words, $b''_{i, j}$ cannot be a best response, a contradiction.
\end{proof}
We remark that the same property also holds when $i$ is a value maximizer, with minor modifications in the proof.

\paragraph{A tradeoff between value and payment in equilibrium.}
We are ready to present a tradeoff between the value $i$ receives and the payment other bidders make in auction $j$.
The intuition is that given Lemma~\ref{lem:marginal}, the curve $H$ must be relatively flat to the right of $(p_{i,j}, H(p_{i,j}))$ (i.e., with slope at most $1$), which means the CDF function $F$ of the highest other bid must also be with a relatively small slope. On the other hand, the payment made by other bidders can be written as
\[
    b_{i, j} \cdot (1 - F(b_{i, j})) + \int_{b_{i, j}}^\infty (1 - F(t)) \dt.
\]
This means if $F(b_{i, j})$, the probability that $i$ wins when bidding $b_{i, j}$, is small, then since $F$ is with a small slope to the right of $b_{i, j}$, the payment must be relatively large.
The following claim formally captures this tradeoff.

\begin{lemma}
\label{lem:local}
    There exists some $x \in [0, 1]$ such that
    \[
        \bE[\val_{i, j}] = x \cdot v \quad \text{and} \quad \sum_{i' \ne i} \bE[p_{i', j}] \ge (1 - x + x \ln x) \cdot v.
    \]
\end{lemma}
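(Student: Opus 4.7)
My plan is to apply Lemma~\ref{lem:marginal} pointwise on the support of $b_{i,j}$ to extract an upper bound on the CDF $F$ of the highest competing bid, and then integrate this bound against the first-price payment formula to obtain the desired lower bound on the payment collected from bidders other than $i$.

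The first step is to show that $H(bF(b)) = vF(b)$ for every $b$ in the support of $b_{i,j}$. For a utility maximizer this is immediate: independence of bids across auctions means that any $b$ in the support must maximize the per-auction utility, so $(v-b)F(b) = \sup_w (H(w) - w) \ge H(bF(b)) - bF(b)$, while $H \ge G$ gives $H(bF(b)) \ge vF(b)$, forcing equality. (The same equality extends to value maximizers via the standard Lagrangian characterization of an LP with a single ROI constraint, so the distinction between the two bidder types is cosmetic for this lemma.) Combining this identity with Lemma~\ref{lem:marginal} and $vF(t) \le H(tF(t))$ yields the pointwise bound $(v-t)F(t) \le (v-b)F(b)$ for every $t \ge b$, equivalently $F(t) \le (v-b)F(b)/(v-t)$ on the interval $[b,\,v(1-F(b)) + bF(b)]$ where this is non-trivial.

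The second step is a direct calculation. Integration by parts on $M := \max_{i' \ne i} b_{i',j}$ gives
\[
    \sum_{i' \ne i} \bE[p_{i',j}] = \bE_b\!\left[b(1 - F(b)) + \int_b^\infty (1 - F(t))\,\dt\right].
\]
Substituting the pointwise CDF bound into the integral and evaluating the resulting elementary antiderivative collapses the bracket to $v(1 - F(b)) + (v-b)F(b)\ln F(b)$. Since $0 \le v - b \le v$ and $F(b)\ln F(b) \le 0$, this is at least $v\,f(F(b))$ where $f(y) := 1 - y + y \ln y$. Finally, $f$ is convex on $[0,1]$ (its second derivative is $1/y$), so Jensen's inequality gives $\bE_b[f(F(b))] \ge f(\bE_b[F(b)]) = f(x)$, where $x := \bE[\val_{i,j}]/v = \bE_b[F(b)]$, which is exactly the claimed bound.

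The main subtlety will be the first step: pinning down $H(bF(b)) = vF(b)$ from the best-response structure for both bidder types, and verifying that the one-step weakening $v-b \le v$ used just before Jensen is the correct one --- it discards the only $b$-dependent factor that would otherwise block Jensen from cleanly separating the $b$-average from the convex function $f$ of the win probability $F(b)$.
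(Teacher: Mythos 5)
Your proposal follows essentially the same route as the paper's proof: apply Lemma~\ref{lem:marginal} to deduce the pointwise CDF bound $F(t) \le (v-b)F(b)/(v-t)$, plug into the payment formula, integrate to get $v(1-F(b)) + (v-b)F(b)\ln F(b)$, weaken $(v-b)$ to $v$ under the $F(b)\ln F(b) \le 0$ sign, and finish with Jensen on the convex map $z \mapsto 1 - z + z\ln z$. Two small notes. First, your derivation of the key identity $H(bF(b)) = vF(b)$ via the two-sided squeeze $(v-b)F(b) = \sup_w(H(w)-w) \ge H(bF(b)) - bF(b) \ge vF(b) - bF(b)$ is actually more explicit than the paper, which simply asserts it; that is a welcome addition. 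Second, there is a minor gap: your unified calculation implicitly assumes $b < v$, since the CDF bound (and hence the ``collapse'' to the bracketed formula) is only available on the interval $[b, v - (v-b)F(b)]$, which is empty or degenerate when $b \ge v$. A utility maximizer can bid $b \ge v$ in support (e.g.\ when $F(b)=0$, so the utility is $0$), so the case is not vacuous. The paper handles it separately with the trivial chain $b(1-F(b)) \ge v(1-F(b)) \ge v\bigl(1-F(b)+F(b)\ln F(b)\bigr)$; you should add this case (and, while you are at it, drop the ``$0 \le v-b$'' part of your side condition, since only $v - b \le v$, i.e.\ $b \ge 0$, is what the $F(b)\ln F(b) \le 0$ weakening needs).
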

\begin{proof}
    Let $x = \bE[F(b_{i, j})]$.
    Clearly $x \in [0, 1]$ and $\bE[\val_{i, j}] = x \cdot v$.
    Our remaining task is to show
    \[
        \sum_{i' \ne i} \bE[p_{i', j}] \ge (1 - x + x \ln x) \cdot v.
    \]
    Recall that
    \[
        \sum_{i' \ne i} \bE[p_{i', j}] = \bE_{b_{i, j}} \left[b_{i, j} \cdot (1 - F(b_{i, j})) + \int_{b_{i, j}}^\infty (1 - F(t)) \dt\right].
    \]
    As a result, in order to lower bound the payment, we want to pointwise upper bound $F$. Fix any $b$ in the support of the best-response strategy $b_{i,j}$, we must have 
    \[
      H(b \cdot F(b)) = G(b \cdot F(b)) = v \cdot F(b).
    \]
    By Lemma~\ref{lem:marginal}, for any $w \ge b \cdot F(b)$, we have $H(w) - H(b \cdot F(b)) \le w - b \cdot F(b)$, which implies that $H(w) \le H(b \cdot F(b)) + w - b \cdot F(b)$.
    Also, since $H$ is the concave envelope of $G$, the same upper bound applies to $G$, and we have
    \[
        G(w) \le H(w) \leq H(b \cdot F(b)) + w - b \cdot F(b) = v \cdot F(b) + w - b \cdot F(b).
    \]
    Due to the monotonicity of $F(x)$, we have that for any $t \ge b$, we have $t \cdot F(t) \ge b \cdot F(b)$. Therefore, given the relation between $G$ and $F$, the above implies that for any $t \in [b, v)$ (note that this range might be empty),
    \[
        v \cdot F(t) \leq G(t \cdot F(t)) \le v \cdot F(b) + t \cdot F(t) - b \cdot F(b) \implies F(t) \le \frac{(v - b) \cdot F(b)}{v - t}.
    \]
    We next bound the payment conditioned on $b_{i, j} = b$.
    We consider two possible cases: $b \ge v$ and $b < v$.
    If $b \ge v$, we simply have
    \begin{align*}
        \sum_{i' \ne i} \bE[p_{i', j} \mid b_{i, j} = b] & = b \cdot (1 - F(b)) + \int_{b}^\infty (1 - F(t)) \dt \\
        & \ge b \cdot (1 - F(b)) \ge (1 - F(b)) \cdot v \ge (1 - F(b) + F(b) \ln F(b)) \cdot v.
    \end{align*}
    In particular, the last inequality is because $F(b) \ln F(b) \le 0$ when $F(b) \in [0, 1]$.
    If $b < v$, we must have $b = v - (v - b) \cdot 1 \le v - (v - b) \cdot F(b)$, and as a result,
    \begin{align*}
        \sum_{i' \ne i} \bE[p_{i', j} \mid b_{i, j} = b] & = b \cdot (1 - F(b)) + \int_{b}^\infty (1 - F(t)) \dt \\
        & \ge b \cdot (1 - F(b)) + \int_{b}^{v - (v - b) \cdot F(b)} \left(1 - \frac{(v - b) \cdot F(b)}{v - t}\right) \dt \\
        & = (1 - F(b)) \cdot v + (v - b) \cdot F(b) \cdot \ln (v - t)\Big|_b^{v - (v - b) \cdot F(b)} \\
        & = (1 - F(b)) \cdot v + (v - b) \cdot F(b) \cdot \ln F(b)  \\
        & \ge (1 - F(b) + F(b) \ln F(b)) \cdot v.
    \end{align*}
    Combining the two cases and further taking the expectation over $b_{i, j}$, we get
    \[
        \sum_{i' \ne i} \bE[p_{i', j}] \ge \bE_{b_{i,j}}[(1 - F(b_{i, j}) + F(b_{i, j}) \ln F(b_{i, j})) \cdot v] \ge (1 - x + x \ln x) \cdot v,
    \]
    where the last inequality is due to Jensen's inequality, the fact that $z \mapsto 1 - z + z \ln z$ is convex on $[0, 1]$, and we chose $x = \bE[F(b_{i, j})]$.
\end{proof}

We remark that Lemma~\ref{lem:local} alone already implies a lower bound of about $0.317$ on the PoA: summing the bound in Lemma~\ref{lem:local} over $i$ and $j$ where $\rw(j) = i$ and exploiting the convexity of the mapping $z \mapsto 1 - z + z \ln z$, one can show there is some $x \in [0, 1]$ such that
\[
    \sum_i \bE[\val_i] \ge x \cdot \sum_j \max_i v_{i, j} \quad \text{and} \quad \sum_i \bE[p_i] \ge (1 - x + x \ln x) \cdot \sum_j \max_i v_{i, j}.
\]
Since $\sum_i \bE[\val_i] \ge \sum_i \bE[p_i]$, the worst case scenario is when $x = 1 - x + x \ln x$, which gives $x \approx 0.317$, a lower bound on the PoA.

\subsection{Putting Everything Together}

In this subsection, we combine Lemma~\ref{lem:value} and Lemma~\ref{lem:local} to obtain an improved (and in fact, tight) lower bound on the PoA of the first-price auction in the mixed autobidding world.
Fixing $n$, $m$, $B_u$, $B_v$ and $\{v_{i, j}\}$, we consider four key quantities in the proof. ($A$): the total value that all value maximizers receive; ($B$): the total value that all utility maximizers receive from auctions where they are the rightful winner; ($C$): the total payment made in auctions where the rightful winner is a value maximizer; and ($D$): the total payment made in auctions where the rightful winner is a utility maximizer, by bidders who are not the rightful winner.
Formally,
\begin{align*}
    & A = \sum_{i \in B_v} \bE[\val_i], && B = \sum_{i \in B_u, j: \rw(j) = i} \bE[\val_{i, j}] \\
    & C = \sum_{i, j: \rw(j) \in B_v} \bE[p_{i, j}], && D = \sum_{i, j: \rw(j) \in B_u \text{ and } \rw(j) \ne i} \bE[p_{i, j}].
\end{align*}
We first show that the welfare can be bounded by a combination of the above quantities.

\begin{lemma}
\label{lem:combination}
    For any bidding strategies $\{\b_i\}_i$ that form an equilibrium, we always have
    \[
        \sum_i \bE[\val_i] \ge \max\{A, C + D\} + B.
    \]
\end{lemma}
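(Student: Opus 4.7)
The plan is to prove the two inequalities $\sum_i \bE[\val_i] \ge A + B$ and $\sum_i \bE[\val_i] \ge C + D + B$ separately; taking the maximum then yields the lemma. The first is immediate by splitting the welfare by bidder type and dropping the non-rightful-winner contributions of utility maximizers:
\[
    \sum_i \bE[\val_i] = A + \sum_{i \in B_u} \bE[\val_i] \ge A + \sum_{i \in B_u,\, j:\, \rw(j) = i} \bE[\val_{i,j}] = A + B.
\]

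For the second inequality, I would further decompose $\sum_{i \in B_u} \bE[\val_i] = B + M$, where $M := \sum_{i \in B_u,\, j:\, \rw(j) \ne i} \bE[\val_{i,j}]$ collects the residual value received by utility maximizers from auctions where they are not the rightful winner. This rewrites the welfare as $A + B + M$, so it suffices to show $A + M \ge C + D$. I would bound $A$ and $M$ by payment terms separately. The ROI constraint gives $\bE[\val_i] \ge \bE[p_i]$ for each $i \in B_v$, and summing produces $A \ge \sum_{i \in B_v,\, j} \bE[p_{i,j}]$. For $M$, I would exploit the fact that in any best response, a utility maximizer's expected utility is non-negative per auction: if $\bE[\val_{i,j} - p_{i,j}] < 0$ for some $j$, then replacing $b_{i,j}$ with $0$ would strictly improve the total expected utility (bidding $0$ always yields a non-negative contribution in that auction and leaves all other auctions unchanged, since total utility is additively separable across auctions), contradicting best response. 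Hence $\bE[\val_{i,j}] \ge \bE[p_{i,j}]$ for every $i \in B_u$ and every $j$, and thus $M \ge \sum_{i \in B_u,\, j:\, \rw(j) \ne i} \bE[p_{i,j}]$.

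Adding the two lower bounds and regrouping by the type of $\rw(j)$ recovers $C + D$ exactly: the payments by value maximizers across all auctions, together with the payments by utility maximizers from auctions where they are not the rightful winner, account for (i) every payment in an auction with $\rw(j) \in B_v$ (yielding $C$), and (ii) every non-rightful-winner payment in an auction with $\rw(j) \in B_u$ (yielding $D$), using that $i \ne \rw(j)$ holds automatically whenever the bidder and the rightful winner have different types. This gives $A + M \ge C + D$, and therefore $\sum_i \bE[\val_i] = A + B + M \ge C + D + B$. The only delicate step is the per-auction utility non-negativity for utility maximizers; once this swap argument is set up, everything else is bookkeeping over how the four partitions of payments interact.
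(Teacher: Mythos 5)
Your proposal is correct and is essentially the same argument as the paper's: the paper proves the bound in a single chain by writing the value-maximizers' total value as a $\max$ with their total payment (via ROI), decomposing the utility maximizers' value into $B$ plus a residual, applying the per-auction individual rationality $\bE[\val_{i,j}] \ge \bE[p_{i,j}]$ for $i \in B_u$, and regrouping to recover $C+D$. You unfold the $\max$ into the two branches $\sum_i \bE[\val_i] \ge A+B$ and $\sum_i \bE[\val_i] \ge C+D+B$, but the underlying facts — ROI for value maximizers, per-auction non-negativity of expected utility for utility maximizers (justified by the same swap-to-zero argument), and the payment bookkeeping across the two partitions of auctions — are identical.
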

\begin{proof}
    Observe that
    \begin{align*}
        \sum_i \bE[\val_i] & = \sum_{i \in B_v} \bE[\val_i] + \sum_{i \in B_u} \bE[\val_i] \\
        & = \max\left\{\sum_{i \in B_v} \bE[\val_i], \sum_{i \in B_v} \bE[p_i]\right\} + \sum_{i \in B_u} \bE[\val_i] \\
        & = \max\left\{\sum_{i \in B_v} \bE[\val_i], \sum_{i \in B_v} \bE[p_i]\right\} + \sum_{i \in B_u, j: \rw(j) = i} \bE[\val_{i, j}] + \sum_{i \in B_u, j: \rw(j) \ne i} \bE[\val_{i, j}] \\
        & = \max\left\{A, \sum_{i \in B_v} \bE[p_i]\right\} + B + \sum_{i \in B_u, j: \rw(j) \ne i} \bE[\val_{i, j}].
    \end{align*}
    Note that for each $i \in B_u$ and $j$, we always have $\bE[\val_{i, j}] \ge \bE[p_{i, j}]$ since $i$ is best responding and the utility $i$ receives in $j$ is at least $0$. Therefore, we can further relax the above into the following:
    \begin{align*}
        \sum_i \bE[\val_i] & \ge \max\left\{A, \sum_{i \in B_v} \bE[p_i]\right\} + B + \sum_{i \in B_u, j: \rw(j) \ne i} \bE[p_{i, j}] \\
        & \ge \max\left\{A, \sum_{i \in B_v} \bE[p_i] + \sum_{i \in B_u, j: \rw(j) \ne i} \bE[p_{i, j}]\right\} + B.
    \end{align*}
    Rearranging the sums within the $\max$, we have that
    \begin{align*}
        \sum_{i \in B_v} \bE[p_i] + \sum_{i \in B_u, j: \rw(j) \ne i} \bE[p_{i, j}] & = \sum_i \bE[p_i] - \sum_{i \in B_u, j: \rw(j) = i} \bE[p_{i, j}] \\
        & = \sum_{i, j: \rw(j) \in B_v} \bE[p_{i, j}] + \sum_{i, j: \rw(j) \in B_u \text{ and } \rw(j) \ne i} \bE[p_{i, j}] \\
        & = C + D.
    \end{align*}
    Substituting in the $\max$ gives the bound to be proved.
\end{proof}

We are now ready to prove our main PoA lower bound.

\begin{lemma}
\label{lem:ub}
    In a mixed autobidding world with both utility maximizers and value maximizers, the PoA of the first-price auction has the following lower bound:
    \[
        \inf_{n, m, B_u, B_v, \{v_{i, j}\}} \poa(n, m, B_u, B_v, \{v_{i, j}\}) \ge \min_{t \in [0, 1]} \frac{1 + t \ln t}{2 - t + t \ln t} \approx 0.457.
    \]
\end{lemma}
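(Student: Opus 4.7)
The plan is to combine the global tradeoff from Lemma~\ref{lem:value} with the local tradeoff from Lemma~\ref{lem:local} via the welfare decomposition in Lemma~\ref{lem:combination}, and reduce the PoA lower bound to a scalar optimization problem.

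First I would normalize so that $\sum_j \max_i v_{i,j} = 1$ and split the optimal welfare as $1 = W_v + W_u$, where $W_v$ and $W_u$ collect the contributions of auctions whose rightful winners are value and utility maximizers respectively. Fix any equilibrium and recall the four quantities $A, B, C, D$ defined before Lemma~\ref{lem:combination}. Lemma~\ref{lem:value} directly yields $A + C \ge W_v$. For each auction $j$ with $\rw(j) = i \in B_u$, Lemma~\ref{lem:local} supplies some $x_j \in [0, 1]$ with $\bE[\val_{i,j}] = x_j v_{i,j}$ and $\sum_{i' \ne i} \bE[p_{i',j}] \ge (1 - x_j + x_j \ln x_j) v_{i,j}$. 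Summing over such $j$ and letting $x := B/W_u \in [0, 1]$, convexity of $\phi(z) = 1 - z + z \ln z$ on $[0,1]$ (since $\phi''(z) = 1/z > 0$) combined with Jensen's inequality applied with weights $v_{i, j}/W_u$ gives $D \ge W_u (1 - x + x \ln x)$.

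Next, Lemma~\ref{lem:combination} lower bounds the equilibrium welfare by $\max\{A, C+D\} + B$, so the PoA lower bound reduces to minimizing $\max\{A, C + D\} + B$ subject to $A + C \ge W_v$, $B = x W_u$, $D \ge (1-x+x\ln x) W_u$, nonnegativity of all variables, $W_u + W_v = 1$, and $x \in [0, 1]$. At the minimum, the payment and value lower bounds tighten and $A + C = W_v$, leaving a two-parameter optimization in $(x, t)$ with $t := W_u$.

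To finish, for fixed $x$ and $t$, minimizing $\max\{A, C+D\}$ over $A + C = 1 - t$ with $A, C \ge 0$ balances $A = C + D$ whenever $D \le 1 - t$, equivalently $t \le 1/(2 - x + x \ln x)$, and otherwise sets $C = 0$. A direct computation gives welfare bound $(1 + xt(1 + \ln x))/2$ in the first regime and $(1 + x \ln x)\, t$ in the second. The first is decreasing in $t$ exactly when $x < 1/e$, while the second is always increasing in $t$ since $1 + x \ln x > 0$ on $[0, 1]$ (its minimum is $1 - 1/e > 0$ at $x = 1/e$). Both expressions coincide at the threshold $t = 1/(2 - x + x \ln x)$ and evaluate there to $(1 + x \ln x)/(2 - x + x \ln x)$; a short monotonicity check shows this threshold is the in-$t$ minimizer for every $x$. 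Renaming $x$ to $t$ and minimizing over $t \in [0, 1]$ yields the claimed bound $\min_{t \in [0,1]} \frac{1 + t \ln t}{2 - t + t \ln t} \approx 0.457$.

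I expect the main obstacle to be the final case analysis: one must verify cleanly that the piecewise objective is monotone in exactly the right directions so that the optimum is always pinned to the threshold $t = 1/(2 - x + x \ln x)$, avoiding stray interior minima. The two facts that make this work are the sign flip of $1 + \ln x$ at $x = 1/e$ in the balanced regime and the strict positivity of $1 + x \ln x$ on $[0, 1]$ in the unbalanced regime.
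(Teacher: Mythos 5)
Your proof follows the paper's strategy closely: decompose equilibrium welfare via Lemma~\ref{lem:combination}, use Lemma~\ref{lem:value} to bound $A + C$, use Lemma~\ref{lem:local} plus Jensen's inequality to bound $D$ in terms of $B$, and reduce the PoA bound to a scalar optimization. The only departure is in how the optimization is solved --- you carry out an explicit two-regime case analysis in $t$, whereas the paper (Lemma~\ref{lem:max}) argues that the two terms inside the $\max$ must be equal at the optimum and then plugs in $x=1$; both routes are valid and reach the same place. One minor slip: the assertion that the threshold $t = 1/(2 - x + x\ln x)$ is the in-$t$ minimizer ``for every $x$'' fails for $x \ge 1/e$, where the balanced-regime expression $(1 + xt(1+\ln x))/2$ is non-decreasing in $t$ and the $t$-minimum is $1/2$, attained at $t = 0$ (while the threshold value is $\ge 1/2$ there). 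This does not affect your final answer, since $1/2$ exceeds the claimed bound $\approx 0.457$, which is achieved at $x \approx 0.158 < 1/e$, but the wording should be tightened to restrict the threshold-is-minimizer claim to $x \le 1/e$.
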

\begin{proof}
    Fix any $n$, $m$, $B_u$, $B_v$, and $\{v_{i, j}\}$.
    Consider any bidding strategies $\{\b_i\}_i$ that form an equilibrium.
    Let
    \[
        V_1 = \sum_{j: \rw(j) \in B_v} \max_i v_{i, j}, \quad V_2 = \sum_{j: \rw(j) \in B_u} \max_i v_{i, j}.
    \]
    By Lemma~\ref{lem:combination}, we have 
    \[
        \frac{\sum_i \bE[\val_i]}{\sum_j \max_i v_{i, j}} \ge \frac{\max\{A, C + D\} + B}{V_1 + V_2}.
    \]

    By Lemma~\ref{lem:value}, we have $A + C \ge V_1$. So there must be some $x \in [0, 1]$, such that $A \ge x \cdot V_1$ and $C \ge (1 - x) \cdot V_1$.
    As for $B$ and $D$, we apply Lemma~\ref{lem:local}.
    For each $j$ where $\rw(j) \in B_u$, Lemma~\ref{lem:local} states that there exists some $y_j \in [0, 1]$, such that
    \[
        \bE[\val_{\rw(j), j}] = y_j \cdot \max_i v_{i, j}, \quad \sum_{i \ne \rw(j)} \bE[p_{i, j}] \ge (1 - y_j + y_j \ln y_j) \cdot \max_i v_{i, j}.
    \]
    Let $y = B/ V_2$. Since $B = \sum_{j: \rw(j) \in B_u} \bE[\val_{\rw(j), j}] \leq \sum_{j: \rw(j) \in B_u} \max_i v_{i, j} = V_2$, we have $y  \in [0, 1]$. In particular,
    \[
       y = \frac{\sum_{j: \rw(j) \in B_u} \bE[\val_{\rw(j), j}]}{\sum_{j: \rw(j) \in B_u} \max_i v_{i, j}} = \frac{\sum_{j: \rw(j) \in B_u} y_j \cdot \max_i v_{i, j}}{\sum_{j: \rw(j) \in B_u} \max_i v_{i, j}}.
    \]
    Since $t \mapsto 1 - t + t \ln t$ is convex, by Jensen's inequality, we must have
    \begin{align*}
        D & = \sum_{j: \rw(j) \in B_u} \sum_{i \ne \rw(j)} \bE[p_{i, j}] \ge \sum_{j: \rw(j) \in B_u} (1 - y_j + y_j \ln y_j) \cdot \max_i v_{i, j} \\
        & \ge (1 - y + y \ln y) \sum_{j: \rw(j) \in B_u} \max_i v_{i, j} = (1 - y + y \ln y) \cdot V_2.
    \end{align*}
    Plugging in the above derived lower bounds for $A$, $B$, $C$, and $D$, we have
    \[
        \frac{\sum_i \bE[\val_i]}{\sum_j \max_i v_{i, j}} \ge \frac{\max\{A, C + D\} + B}{V_1 + V_2} \ge \frac{\max\{x \cdot V_1, (1 - x) \cdot V_1 + (1 - y + y \ln y) \cdot V_2\} + y \cdot V_2}{V_1 + V_2},
    \]
    where $V_1, V_2 \ge 0$, $V_1 + V_2 > 0$ and $x, y \in [0, 1]$.
    In other words, we must have
    \begin{gather*}
        \poa(n, m, B_u, B_v, \{v_{i, j}\}) \ge \inf_{V_1, V_2 \ge 0, V_1 + V_2 > 0, x, y \in [0, 1]} \frac{\max\{x \cdot V_1, (1 - x) \cdot V_1 + (1 - y + y \ln y) \cdot V_2\} + y \cdot V_2}{V_1 + V_2}.
    \end{gather*}
    A careful analysis (see Lemma~\ref{lem:max} in Appendix~\ref{app:proofs}) of the right-hand-side shows that it equals to
    \[
        \min_{y \in [0, 1]} \frac{1 + y \ln y}{2 - y + y \ln y} \approx 0.457,
    \]
    which is achieved when $x = 1$ and $V_1 = (1 - y + y \ln y) \cdot V_2$.
    This concludes the proof.
\end{proof}

\subsection{A Matching Upper Bound}

To conclude this section, we present a problem instance in which our PoA lower bound can be actually achieved, even with $n = 2$ bidders and $m = 2$ auctions.

\begin{lemma}
\label{lem:lb}
    When $n = m = 2$, there exists $B_u$, $B_v$ and $\{v_{i, j}\}$ such that
    \[
        \poa(n, m, B_u, B_v, \{v_{i, j}\}) \le \min_{t \in [0, 1]} \frac{1 + t \ln t}{2 - t + t \ln t} \approx 0.457.
    \]
\end{lemma}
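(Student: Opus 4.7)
The plan is to exhibit a two-bidder, two-auction mixed-autobidding instance whose worst equilibrium saturates the PoA lower bound of Lemma~\ref{lem:ub}. The construction is read off from the tightness conditions in that lemma: the bound is attained at $x=1$ (the value maximizer wins her rightful auction outright and gets no ``rightful'' value elsewhere) and $V_1=(1-y+y\ln y)V_2$ (the value maximizer's rightful welfare equals the payment that Lemma~\ref{lem:local} extracts in the utility maximizer's auction when that utility maximizer only wins with probability $y$). So the instance should have the value maximizer collect her entire rightful welfare for free in one auction and then ``burn'' exactly that amount of ROI slack by overbidding in the utility maximizer's auction, thereby pushing the utility maximizer's win probability down to some $c\in(0,1)$.

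Concretely, fix $c\in(0,1)$, to be optimized at the end. Take $n=m=2$; declare bidder~$1$ a value maximizer and bidder~$2$ a utility maximizer, with $v_{1,1}=1-c+c\ln c$, $v_{1,2}=0$, $v_{2,1}=0$, and $v_{2,2}=1$. I propose the following bidding strategies: in auction~$1$ both bidders bid $0$ (bidder~$1$ wins by the higher-value tiebreak); in auction~$2$ bidder~$2$ bids $0$ deterministically, while bidder~$1$ draws her bid from the distribution with CDF $F_1(b)=c/(1-b)$ on $[0,1-c]$, i.e.\ an atom of mass $c$ at $0$ together with the density $c/(1-b)^{2}$ on $(0,1-c]$.

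The verification consists of three short checks. For bidder~$2$: any bid $b\in[0,1-c]$ yields utility $(1-b)F_1(b)=c$, any bid $b>1-c$ yields utility $1-b<c$, and bidding $0$ wins the tie in auction~$2$ by the higher-value rule, so $b_{2,2}=0$ is a best response. For bidder~$1$: since $v_{1,2}=0$, the total value equals $v_{1,1}=V_1$ under any strategy, so the only requirement is ROI-feasibility; a direct computation
\[
    \int_{0}^{1-c} b\cdot\frac{c}{(1-b)^{2}}\,\mathrm{d}b \;=\; c\!\left(\tfrac{1}{c}-1+\ln c\right)\;=\;1-c+c\ln c\;=\;V_1
\]
shows that bidder~$1$'s expected payment in auction~$2$ equals exactly $V_1$, so ROI binds and the prescribed randomization is a valid value-maximizing best response. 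Finally, since bidder~$2$ wins auction~$2$ with probability $c$ (bidder~$1$'s atom at $0$, resolved by higher-value tiebreak), the equilibrium welfare is $V_1+c\cdot v_{2,2}=V_1+c=1+c\ln c$, while the optimal welfare is $V_1+1=2-c+c\ln c$. Hence the instance's PoA is at most $(1+c\ln c)/(2-c+c\ln c)$, and optimizing $c$ matches the lower bound.

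The only nontrivial step is guessing the right CDF for bidder~$1$'s randomization: it is pinned down by two simultaneous requirements, namely (i) bidder~$2$'s indifference on $[0,1-c]$, which forces $(1-b)F_1(b)$ to be constant and hence $F_1(b)=c/(1-b)$ with an atom of mass exactly $c$ at $0$, and (ii) that bidder~$1$'s expected auction-$2$ payment equals exactly her ROI slack $V_1=1-c+c\ln c$ inherited from auction~$1$. The closed-form integral above shows these two constraints are mutually consistent for every $c\in(0,1)$, which is why the construction works uniformly and matches the Lemma~\ref{lem:ub} lower bound at its minimizer.
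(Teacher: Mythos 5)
Your construction is exactly the paper's, up to a relabeling of which bidder is the value maximizer and which auction serves which role: the value maximizer wins one auction for free by tiebreak, then exhausts her ROI slack $1-c+c\ln c$ by randomizing in the utility maximizer's auction with CDF $c/(1-b)$, leaving the utility maximizer indifferent on $[0,1-c]$ and winning only with probability $c$. The equilibrium verification and the resulting ratio $(1+c\ln c)/(2-c+c\ln c)$ match the paper's argument, so the proposal is correct and takes essentially the same approach.
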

\begin{proof}
    Let $B_u = \{1\}$ and $B_v = \{2\}$.
    We choose $v_{1, 1} = 1$ and $v_{1, 2} = v_{2, 1} = 0$, and leave $v_{2, 2} > 0$ to be fixed later.
    The idea is to let bidder $2$ win in auction $2$ for free, and therefore bidder $2$ gains enough slackness in the ROI constraint and can therefore compete with bidder $1$ in auction $1$.
    To be specific, in order for bidder $2$ to win for free in auction $2$, we choose $b_{1, 2} = b_{2, 2} = 0$ so that bidder $2$ wins by tiebreaking and pays $0$.
    We want bidder $1$'s value-payment frontier in auction $1$ to have the following shape: the curve starts at $(0, t)$ for some $t \in [0, 1]$ (to be fixed later), and then increases linearly with slope precisely $1$, until it hits the upper bound, i.e., when the $y$-axis becomes $v_{1, 1} = 1$.
    This ensures that bidder $1$ bidding $b_{1, 1} = 0$ deterministically is a best response.
    Translating this to bidder $2$'s bidding strategy, we would like that for any $b \ge 0$,
    \[
        \Pr[b_{2, 1} \le b] = \min\left\{1, \frac{t}{1 - b}\right\}.
    \]
    One can show that the expected payment that bidder $2$ makes in auction $1$ is $1 - t + t \ln t$.
    In order to satisfy bidder $2$'s ROI constraint, we choose $v_{2, 2} = 1 - t + t \ln t$.
    One may check that the above bidding strategies in fact form an equilibrium.
    Note that $\bE[\val_1] = \Pr[b_{2, 1} = 0] \cdot v_{1, 1} = t$, and $\bE[\val_2] = v_{2, 2} = 1 - t + t \ln t$.

    Considering the ratio between the optimal welfare and the welfare resulting from the above equilibrium, we have
    \[
        \poa(n, m, B_u, B_v, \{v_{i, j}\}) \le \frac{\bE[\val_1 + \val_2]}{v_{1, 1} + v_{2, 2}} = \frac{1 + t \ln t}{2 - t + t \ln t},
    \]
    for any $t \in [0, 1]$.
    Minimizing over $t$ gives
    \[
        \poa(n, m, B_u, B_v, \{v_{i, j}\}) \le \min_{t \in [0, 1]} \frac{1 + t \ln t}{2 - t + t \ln t}. \qedhere
    \]
\end{proof}

Theorem~\ref{thm:main} is a direct corollary of Lemma~\ref{lem:ub} and Lemma~\ref{lem:lb}.

\section{Improved Efficiency with ML Advice}

In this section, we turn to a more practically-motivated setting, where the seller can rely on machine-learned advice to set reserves and thereby further improve the efficiency of the first-price auction.
In particular, we show that as the advice gets more and more accurate, there is a way to set reserves such that the PoA of the first-price auction approaches $1$.

\paragraph{Setup.}
Following prior work \cite{BalseiroDMMZ21},
we consider the setting in which for each auction $j$, the seller can access a signal $s_j$, which approximates the highest value $\max_i v_{i, j}$ in this auction.
Formally, we assume there is a constant $\gamma \in [0, 1]$ such that $s_j \in [\gamma \cdot \max_i v_{i, j}, \max_i v_{i, j}]$ for each auction $j$.\footnote{As we focus on the single-slot environment in this paper, for ease of presentation, our assumption here is weaker than \citet{BalseiroDMMZ21} as they assume for each bidder $i$ and each auction $j$, there is a signal $s_{i,j} \in [\gamma \cdot v_{i, j}, v_{i, j}]$, and setting $s_j:=\max_i s_{i,j}$ gives the signals which can be used in our setting. Our result can be extended to the multi-slot environment by adopting the assumption from \citet{BalseiroDMMZ21}.}


\paragraph{First-price auction with machine-learned reserves.} The way we exploit machine-learned signals is to use them as reserve prices in the first-price auction.
To be more specific, in each auction $j$, we set a reserve $r_j = s_j$, such that a bidder $i$ can win only if $i$'s bid $b_{i, j} \ge r_j$.
In particular, if $\max_i b_{i, j} < r_j$, then no one wins in auction $j$. This is different from the first-price auction without reserves, where there is always exactly $1$ winner.

We overload the notations and let $x_{i, j} = x_{i, j}(r_j, \{b_{i', j}\}_{i'})$ be the indicator variable that $i$ wins in auction $j$, and define $\{\val_{i, j}\}$ and $\{p_{i, j}\}$ in a similar way such that all these variables depend on both the reserves $\{r_j\}$ and the bidding profile $\{b_{i', j}\}_{i'}$.
For any $n$, $m$, $B_u$, $B_v$, and $\{v_{i, j}\}$, we consider the following parametrized instance-wise PoA when each $r_j = s_j \in [\gamma \cdot \max_i v_{i, j}, \max_i v_{i, j}]$:
\[
    \poa_\gamma(n, m, B_u, B_v, \{v_{i, j}\}) = \inf_{\substack{\{b_{i, j}\} \text{ form an equilibrium} \\ \forall j, r_j \in [\gamma \cdot \max_i v_{i, j}, \max_i v_{i, j}]}} \frac{\sum_i \bE[\val_i]}{\sum_j \max_i v_{i, j}}.
\]

Our main result in the section is captured by the following theorem.
\begin{theorem}
\label{thm:main-ml}
    For any $n$, $m$, $B_u$, $B_v$, $\{v_{i, j}\}$ and $\gamma \in [0, 1]$,
    \[
        \poa_\gamma(n, m, B_u, B_v, \{v_{i, j}\}) \ge \min_{t \in [0, 1]} \frac{1 + t \ln t - \gamma t (1 + \ln t)}{2 - t - \gamma + (1 - \gamma) t \ln t}.
    \]
\end{theorem}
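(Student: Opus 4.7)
My plan is to mimic the proof of Theorem~\ref{thm:main}, generalizing each of the three building blocks -- Lemma~\ref{lem:value}, Lemma~\ref{lem:local}, and Lemma~\ref{lem:combination} -- to the setting with reserves $r_j \in [\gamma \max_i v_{i,j}, \max_i v_{i,j}]$. First I would observe that the proof of Lemma~\ref{lem:combination} never touches reserves, so its conclusion $\sum_i \bE[\val_i] \geq \max\{A, C+D\} + B$ carries over verbatim. The substantive work is in adapting the two other lemmas and re-solving the resulting optimization.

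The key new ingredient is an improved local analysis. For an auction $j$ whose rightful winner $i$ is a utility maximizer, I would argue that every $b$ in the support of $b_{i,j}$ satisfies $b \geq r_j \geq \gamma v$ in the non-degenerate case: if $F(r_j) > 0$, deviating to bid exactly $r_j$ yields strictly positive utility $(v - r_j) F(r_j)$, so bidding below the reserve cannot be a best response (the complementary case $F(r_j) = 0$ contributes trivially). This shrinks the gap $v - b$ from at most $v$ to at most $(1-\gamma) v$. Retracing the proof of Lemma~\ref{lem:local}, where the step $(v - b) F(b) \ln F(b) \geq v F(b) \ln F(b)$ was used to simplify $v(1 - F(b)) + (v-b) F(b) \ln F(b)$, I would instead use $(v - b) F(b) \ln F(b) \geq (1-\gamma) v F(b) \ln F(b)$ (still valid since $\ln F(b) \leq 0$) to obtain the refined local tradeoff: there exists $x \in [0,1]$ with $\bE[\val_{i,j}] = x v$ and $\sum_{i' \ne i} \bE[p_{i',j}] \geq (1 - x + (1-\gamma) x \ln x) v$. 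Summing over auctions and applying Jensen's inequality to the (still convex) map $z \mapsto 1 - z + (1-\gamma) z \ln z$ lifts this to $D \geq (1 - y + (1-\gamma) y \ln y) V_2$ with $y = B/V_2$.

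For Lemma~\ref{lem:value}, the truthful proxy $b'_{i,j} = v_{i,j}$ still meets the reserve for any rightful winner (since $v_{i,j} \geq r_j$) and remains ROI-binding, so the original value-plus-payment argument reproduces $A + C \geq V_1$ verbatim. To reach the theorem's tight ratio I would additionally need a supplementary reserve-based inequality: every winning bid in a value-maximizer-rightful-winner auction is at least $r_j \geq \gamma v_{i,j}$, which propagates to an additional lower bound on $C$. Feeding the resulting constraints into the welfare bound $\sum_i \bE[\val_i] / (V_1 + V_2) \geq (\max\{A, C+D\} + B)/(V_1 + V_2)$ and optimizing analogously to Lemma~\ref{lem:max} should recover the theorem's expression $\min_{t \in [0,1]} (1 + t \ln t - \gamma t (1 + \ln t))/(2 - t - \gamma + (1-\gamma) t \ln t)$.

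The hard part will be pinning down the correct quantitative form of the reserve-based strengthening of Lemma~\ref{lem:value}. A naive combination that only uses the modified Lemma~\ref{lem:local} together with the unchanged $A + C \geq V_1$ gives at best the ratio $(1 + (1-\gamma) t \ln t)/(2 - t + (1-\gamma) t \ln t)$, which at $\gamma \to 1$ degrades to $1/(2-t)$ rather than the required limit of $1$. Producing the extra $-\gamma t$ in the numerator and $-\gamma$ in the denominator requires careful accounting of how the reserve's payment floor in value-maximizer-rightful-winner auctions interacts with the $(1-\gamma)$ factor in the local analysis; I expect this balance to manifest in the LP through the choice $V_1 = (1 - t - \gamma + (1-\gamma) t \ln t) V_2$ at the optimizer.
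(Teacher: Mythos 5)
Your high-level plan matches the paper's: carry Lemma~\ref{lem:combination} over verbatim, strengthen the local tradeoff to $\sum_{i' \ne i}\bE[p_{i',j}] \ge (1 - x + (1-\gamma) x\ln x)\cdot v$, and strengthen the value-maximizer tradeoff. Your derivation of the improved local lemma is essentially the paper's Lemma~\ref{lem:local-ml}, modulo one imprecision: the complementary case to $b \ge r_j$ should be that $\Pr[\max_{i' \ne i} b_{i',j} \ge v] = 1$ (which makes the payment at least $v$ and the bound trivial), not merely $F(r_j) = 0$ -- for a utility maximizer with some supported bid below $r_j$, the best-response utility is $0$, which forces $F(c) = 0$ for every $c < v$, not just at $c = r_j$. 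You also correctly diagnose that $A + C \ge V_1$ alone yields only $(1 + (1-\gamma)t\ln t)/(2 - t + (1-\gamma)t\ln t)$, which fails to approach $1$ as $\gamma \to 1$, so Lemma~\ref{lem:value} must be strengthened.

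The gap is in what that strengthening should be. You propose a separate floor on $C$ coming from ``every winning bid is at least $r_j \ge \gamma v_{i,j}$,'' which amounts to $C \ge \gamma V_1$. This is too weak: combined with $A + C \ge V_1$ it yields $C \ge \max(1-x,\gamma)\cdot V_1$ when $A = x V_1$, whereas the tight bound requires $C \ge (1 - (1-\gamma)x)\cdot V_1$, and $1 - (1-\gamma)x \ge \max(1-x,\gamma)$ for all $x\in[0,1]$, with strict inequality in general. The paper's Lemma~\ref{lem:value-ml} instead proves the \emph{joint} inequality $(1-\gamma)A + C \ge V_1$. The argument: for each $j$ with $\rw(j) = i \in B_v$, either (1) $b_{i,j} \ge r_j$ deterministically, or (2) $\Pr[\max_{i'\ne i} b_{i',j} \ge v_{i,j}] = 1$ (otherwise $i$ could shift mass from below $r_j$ to $v_{i,j}$, preserving the ROI constraint while strictly increasing value). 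Let $q_j = \Pr[\max_{i'\ne i} b_{i',j} \le v_{i,j}]$. In case (1) the reserve is always met, the realized winning bid is at least $\max(r_j, \max_{i'\ne i} b_{i',j})$, so $\sum_{i'}\bE[p_{i',j}] \ge v_{i,j}(1 - q_j) + r_j q_j \ge v_{i,j} - (1-\gamma)v_{i,j} q_j$; case (2) gives the same bound trivially. Pairing this with the truthful-proxy value bound $\bE[\val_i] \ge \sum_{j:\rw(j)=i} v_{i,j} q_j$ (still valid since $v_{i,j} \ge r_j$) and summing over $i \in B_v$ yields $(1-\gamma)A + C \ge V_1$. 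Only with this joint form does the optimization in Lemma~\ref{lem:max-ml} produce the $-\gamma t(1+\ln t)$ in the numerator and $-\gamma$ in the denominator; your proposed standalone floor on $C$ does not.
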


One way to interpret the lower bound is it provides a smooth transition between the case without machine-learned advice (or with totally unreliable advice, i.e., $\gamma = 0$), and the case where the advice is perfectly accurate (i.e., $\gamma = 1$) and the seller knows precisely the value of the rightful winner in each auction.
In the former scenario corresponding to $\gamma = 0$, we recover the tight PoA bound stated in Theorem~\ref{thm:main}.
In the latter scenario corresponding to $\gamma = 1$, as one would expect, the theorem gives a lower bound of $1$, which means the auctions are perfectly efficient.
More generally, for $\gamma$ between $0$ and $1$, the lower bound in Theorem~\ref{thm:main-ml} increases monotonically as $\gamma$ increases, since the fraction within the $\min$ increases in $\gamma$ for any fixed $t \in [0, 1]$: 
\[
  \frac{1 + t \ln t - \gamma t (1 + \ln t)}{2 - t - \gamma + (1 - \gamma) t \ln t} = \frac{(1-t)^2}{(2-t+t\ln t)(1+t\ln t) - (1+t\ln t)^2\gamma} + \frac{t+ t\ln t}{1+t\ln t},
\]
where $1+t\ln t \in (0, 1]$ and $2-t+t\ln t \geq 1 \geq \gamma \geq \gamma \cdot (1+t\ln t)$.

In order words, the efficiency of the first-price auction with machine-learned reserves improves as the reserves become more and more accurate.
To better illustrate the rate at which the efficiency improves, we plot our lower bound against the accuracy $\gamma$ in Figure~\ref{fig:ml}.
As the plot shows, our lower bound improves approximately linearly as the accuracy $\gamma$ increases from $0$ to $1$.

\begin{figure}
\centering
\includegraphics[width=0.8\linewidth]{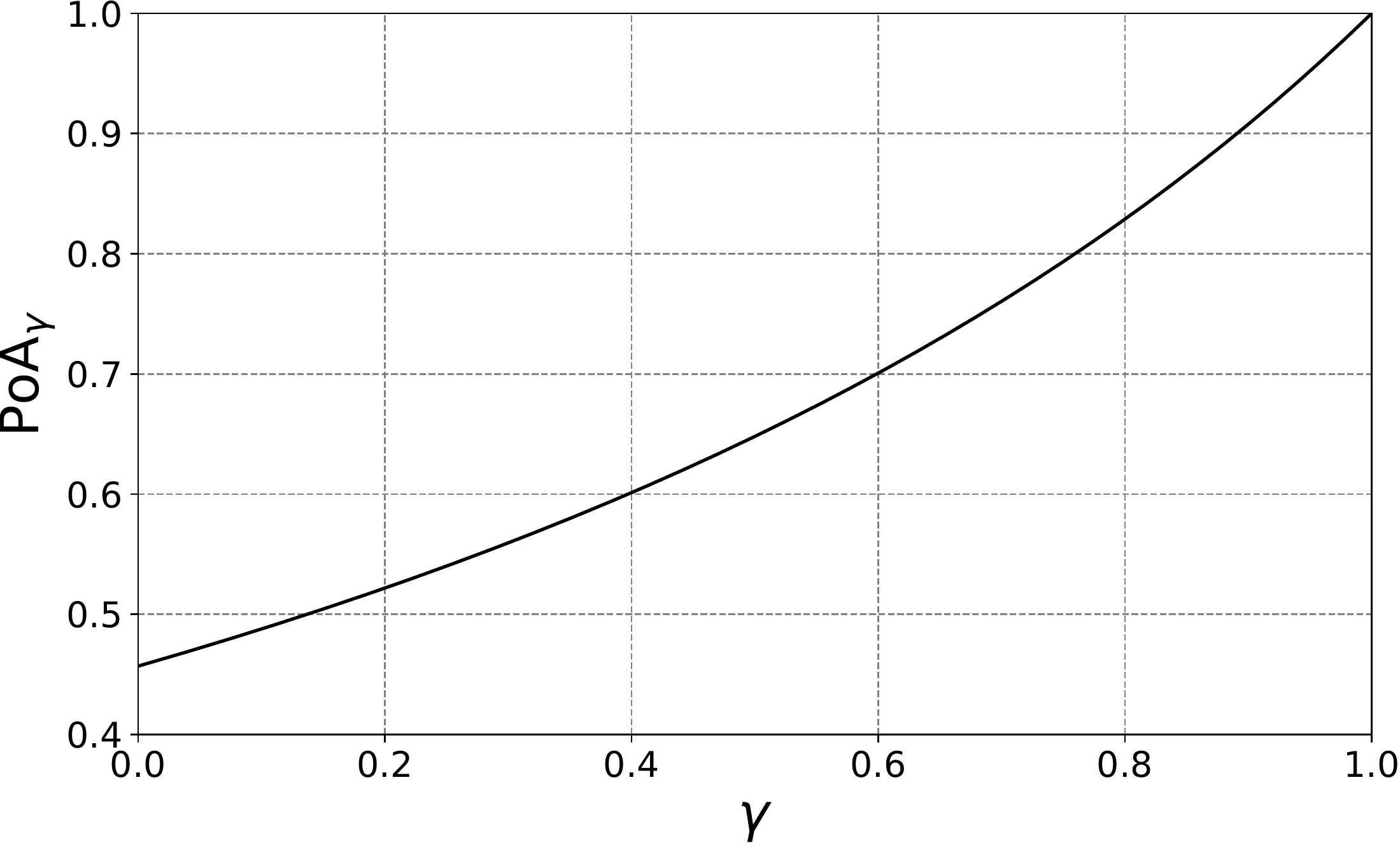}
\caption{The efficiency of the first-price auction with machine-learned reserves when the accuracy of the reserves is $\gamma \in [0, 1]$.}
\label{fig:ml}
\end{figure}

To prove Theorem~\ref{thm:main-ml}, the plan is to apply Lemma~\ref{lem:combination}, which provides a lower bound on the welfare in any equilibrium, and consider tradeoffs between the $4$ quantities involved in the lower bound.
The key difference is that we obtain better tradeoffs with machine-learned reserves.
In the rest of the section, we fix $n$, $m$, $B_u$, $B_v$, and $\{v_{i, j}\}$.
With machine-learned reserves, the tradeoff given by Lemma~\ref{lem:value} improves to the following (note the factor of $(1 - \gamma)$ on the left hand side).

\begin{lemma}
\label{lem:value-ml}
    In auctions with machine-learned reserves of accuracy $\gamma \in [0, 1]$, for any $\{\b_i\}_i$ that form an equilibrium,
    \[
        (1 - \gamma) \cdot \sum_{i \in B_v} \bE[\val_i] + \sum_{i, j: \rw(j) \in B_v} \bE[p_{i, j}] \ge \sum_{j: \rw(j) \in B_v} \max_i v_{i, j}.
    \]
\end{lemma}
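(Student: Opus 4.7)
The plan is to adapt the proof of Lemma~\ref{lem:value} with one new structural observation about equilibria under reserves. As before, the starting point is the truthful proxy $\b'_i$ for each value maximizer $i$: bid $b'_{i,j} = v_{i,j}$ in every auction $j$ with $i = \rw(j)$, and do not participate elsewhere. Since $v_{i,j} \ge r_j$ whenever $i = \rw(j)$, this proxy always clears the reserve and its ROI constraint holds with equality. Best response $\bE[\val_i] \ge \bE[\val_i(\b'_i, \b_{-i})]$ therefore yields
\[
    \bE[\val_i] \ge \sum_{j:\, i = \rw(j)} q_j \cdot v_{i,j},
\]
where $q_j = \Pr[\max_{i' \ne i} b_{i', j} \le v_{i, j}]$.

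The key new ingredient is a structural claim: for every $j$ with $i = \rw(j) \in B_v$, someone wins in $j$ with probability one under the equilibrium bids. I would prove this by considering the single-auction deviation that replaces $b_{i,j}$ with $\tilde b_{i,j} = \max(r_j,\, b_{i,j})$, leaving other coordinates of $\b_i$ unchanged. A case split on $\{b_{i,j} \ge r_j\}$ versus $\{b_{i,j} < r_j\}$ shows that this deviation changes $i$'s expected value and expected payment in $j$ by exactly $v_{i,j} \cdot \Pr[b_{i,j} < r_j,\, M_j \le r_j]$ and $r_j \cdot \Pr[b_{i,j} < r_j,\, M_j \le r_j]$ respectively, where $M_j = \max_{i' \ne i} b_{i',j}$. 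Since $v_{i,j} \ge r_j$, the ROI surplus weakly improves and the deviation is feasible, so best response forces the value gain to vanish, giving $\Pr[b_{i,j} < r_j,\, M_j \le r_j] = 0$ and hence $\Pr[\max_i b_{i,j} \ge r_j] = 1$.

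With this, the total equilibrium payment in auction $j$ admits the improved bound
\[
    \sum_{i'} \bE[p_{i', j}] = \bE\Bigl[\max_i b_{i,j}\Bigr] \ge v_{i,j}(1 - q_j) + r_j \cdot q_j \ge v_{i,j}\bigl(1 - (1-\gamma) q_j\bigr),
\]
splitting on whether $M_j > v_{i,j}$ (so the winning bid exceeds $v_{i,j}$) or $M_j \le v_{i,j}$ (so, by the structural claim, the winning bid is at least the reserve $r_j \ge \gamma v_{i,j}$). Combining $(1-\gamma)$ times the value bound with this payment bound gives $(1-\gamma) q_j v_{i,j} + v_{i,j}(1 - (1-\gamma) q_j) = v_{i,j}$ per auction; summing over $j$ with $i = \rw(j)$ and then over $i \in B_v$ yields Lemma~\ref{lem:value-ml}.

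The main obstacle is the ``someone always wins'' claim, which was automatic in the no-reserve setting of Lemma~\ref{lem:value} but here genuinely requires invoking the value-maximizer best-response condition. Everything else is a routine combination of the truthful-proxy argument of Lemma~\ref{lem:value} with the per-auction reserve bound $r_j \ge \gamma \cdot \max_i v_{i,j}$.
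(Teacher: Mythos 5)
Your proof is correct and follows essentially the same approach as the paper: use the truthful proxy to lower-bound the value maximizer's value, establish a structural claim via a deviation argument so that the winning bid is at least $r_j \ge \gamma v_{i,j}$ whenever the highest competing bid is below $v_{i,j}$, and combine via the ROI constraint. The paper proves a slightly stronger dichotomy (either $b_{i,j} \ge r_j$ deterministically or $\Pr[\max_{i' \ne i} b_{i',j} \ge v_{i,j}] = 1$) by deviating to $v_{i,j}$ rather than $r_j$, but your weaker claim that the reserve is cleared almost surely is exactly what the payment bound requires, so the arguments are equivalent in effect.
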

\begin{proof}
    The proof is similar to that of Lemma~\ref{lem:value} at a high level, except that the expected payment is higher, because in each auction $j$, the rightful winner $\rw(j)$ must bid at least $r_j$ (which is a constant fraction of $v_{\rw(j), j}$) unless the highest competing bid is at least $v_{\rw(j), j}$ with probability $1$.
    
    Again fix any equilibrium strategies $\{\b_i\}$.
    By the same argument as in the proof of Lemma~\ref{lem:value}, for any value maximizer $i \in B_v$, we have
    \[
        \bE[\val_i] \ge \sum_{j: i = \rw(j)} v_{i, j} \cdot \Pr\left[\max_{i' \ne i} b_{i', j} \le v_{i, j}\right].
    \]
    Also, observe that for any value maximizer $i \in B_v$ and auction $j$ where $\rw(j) = i$, at least one of the following properties must hold: Either (1) $b_{i, j} \ge r_{i, j}$ deterministically, or (2) $\Pr[\max_{i' \ne i} b_{i', j} \ge v_{i, j}] = 1$. 
    This is because if neither of the two holds, then $i$ for any $b < r_j$ in the support of $b_{i, j}$, $i$ would be better off by moving probability mass from $b$ to $v_{i, j}$, which preserves feasibility and strictly increases the value $i$ receives from $j$.
    For each $j$ where (1) holds, we have
    \begin{align*}
        \sum_{i'} \bE[p_{i', j}] & \ge v_{i, j} \cdot \left(1 - \Pr\left[\max_{i' \ne i} b_{i', j} \le v_{i, j}\right]\right) + r_j \cdot \Pr\left[\max_{i' \ne i} b_{i', j} \le v_{i, j}\right] \\
        & \ge v_{i, j} \cdot \left(1 - \Pr\left[\max_{i' \ne i} b_{i', j} \le v_{i, j}\right]\right) + \gamma \cdot v_{i, j} \cdot \Pr\left[\max_{i' \ne i} b_{i', j} \le v_{i, j}\right] \\
        & = v_{i, j} - (1 - \gamma) \cdot v_{i, j} \cdot \Pr\left[\max_{i' \ne i} b_{i', j} \le v_{i, j}\right].
    \end{align*}
    For each $j$ where (2) holds, we immediately know that
    \begin{align*}
        \sum_{i'} \bE[p_{i', j}] & \ge v_{i, j} \cdot \Pr\left[\max_{i' \ne i} b_{i', j} \ge v_{i, j}\right] = v_{i, j} \ge v_{i, j} - (1 - \gamma) \cdot v_{i, j} \cdot \Pr\left[\max_{i' \ne i} b_{i', j} \le v_{i, j}\right].
    \end{align*}
    So combining the two cases and summing over $j$ where $i = \rw(j)$,
    \[
        \sum_{i', j: \rw(j) = i} \bE[p_{i', j}] \ge \sum_{j: \rw(j) = i} \left(v_{i, j} - (1 - \gamma) \cdot v_{i, j} \cdot \Pr\left[\max_{i' \ne i} b_{i', j} \le v_{i, j}\right]\right).
    \]
    Combining this with the lower bound on $\bE[\val_i]$ above, we get
    \[
        (1 - \gamma) \cdot \bE[\val_i] + \sum_{i', j: \rw(j) = i} \bE[p_{i', j}] \ge \sum_{j: \rw(j) = i} v_{i, j}.
    \]
    Further summing over $i \in B_v$ gives the inequality to be proved.
\end{proof}

We note that the above lemma also implies a lower bound of $1 / (2 - \gamma)$ on the PoA of the first-price auction with machine-learned reserves in the full autobidding world in which all bidders are value maximizers.
When $\gamma \to 0$ (i.e., reserves are extremely unreliable), the bound degrades to $1/2$; when $\gamma \to 1$ (i.e., reserves are extremely accurate), the bound approaches $1$, which means the auctions are efficient.

\begin{corollary}
\label{cor:main-ml-autoonly}
    For any $n$, $m$, $\{v_{i, j}\}$ and $\gamma \in [0, 1]$, $\poa_\gamma(n, m, B_u = \emptyset, B_v = [n], \{v_{i, j}\}) \ge 1 / (2 - \gamma)$.
\end{corollary}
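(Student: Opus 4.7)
The plan is to derive the bound as an immediate consequence of Lemma~\ref{lem:value-ml} combined with the ROI constraints that bind every bidder in the full autobidding setting. The setup $B_u = \emptyset$, $B_v = [n]$ means every bidder is a value maximizer, so in particular $\rw(j) \in B_v$ for every auction $j$. This lets me invoke Lemma~\ref{lem:value-ml} without any work to isolate the ``value maximizer part'' of the sums.

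First, I would apply Lemma~\ref{lem:value-ml} to any equilibrium $\{\b_i\}_i$, which after specializing the index sets reads
\[
    (1-\gamma)\sum_{i} \bE[\val_i] + \sum_{i,j} \bE[p_{i,j}] \ge \sum_j \max_i v_{i,j}.
\]
Next, I would rewrite the double payment sum as $\sum_{i} \bE[p_i]$, and invoke the ROI constraint $\bE[p_i] \le \bE[\val_i]$, which holds for every $i \in B_v = [n]$. Substituting this into the previous display yields
\[
    (1-\gamma)\sum_i \bE[\val_i] + \sum_i \bE[\val_i] = (2-\gamma)\sum_i \bE[\val_i] \ge \sum_j \max_i v_{i,j}.
\]
Dividing both sides by $(2-\gamma)\sum_j \max_i v_{i,j}$, which is positive by the standing assumption that $\sum_j \max_i v_{i,j} > 0$, gives $\sum_i \bE[\val_i]/\sum_j \max_i v_{i,j} \ge 1/(2-\gamma)$. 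Taking the infimum over equilibria and over admissible reserves $r_j \in [\gamma \max_i v_{i,j}, \max_i v_{i,j}]$ then delivers $\poa_\gamma \ge 1/(2-\gamma)$.

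There is essentially no obstacle here beyond bookkeeping: Lemma~\ref{lem:value-ml} already packages the novel local argument about reserves pushing up payments of the rightful winner, and the ROI inequality is part of the definition of a value maximizer. I would just need to be a little careful to note that the inequality $\bE[p_i] \le \bE[\val_i]$ in the presence of reserves still follows from the same feasibility argument (a value maximizer could always bid $0$ everywhere to satisfy ROI), which is why binding ROI in equilibrium is preserved. Nothing in the proof requires new machinery.
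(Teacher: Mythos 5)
Your derivation is correct and matches the paper's intent exactly: the paper presents this corollary as an immediate consequence of Lemma~\ref{lem:value-ml} in the full-autobidding case, obtained by specializing the index sets, rewriting $\sum_{i,j}\bE[p_{i,j}]$ as $\sum_i\bE[p_i]$, and invoking the ROI constraint $\bE[p_i]\le\bE[\val_i]$. The only minor point worth trimming is the remark about bidding zero to preserve ROI feasibility: since the ROI constraint is baked into the value maximizer's optimization problem, any equilibrium strategy satisfies it by definition, so no further justification is needed.
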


For utility maximizers, we adapt the local analysis, and obtain the following improved tradeoff between value and payment.

\begin{lemma}
\label{lem:local-ml}
    In auctions with machine-learned reserves of accuracy $\gamma \in [0, 1]$, for any $\{\b_i\}_i$ that form an equilibrium, and any bidder $i$ and auction $j$ where $\rw(j) = i$, there exists some $x \in [0, 1]$ such that
    \[
        \bE[\val_{i, j}] = x \cdot v_{i, j} \quad \text{and} \quad \sum_{i' \ne i} \bE[p_{i', j}] \ge (1 - x + (1 - \gamma) x \ln x) \cdot v_{i, j}.
    \]
\end{lemma}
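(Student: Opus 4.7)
The plan is to adapt the proof of Lemma~\ref{lem:local} to the reserved-price setting: the high-level structure (value-payment frontier, marginal-slope characterization, pointwise upper bound on the CDF $F$, integration) is preserved, and the crux is that the reserve $r_j \ge \gamma v_{i,j}$ forces every winning bid of $i$ to lie in $[\gamma v_{i,j}, \infty)$, which tightens a single step of the integration by a factor of $(1 - \gamma)$.

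Concretely, I would reuse the definitions of $F$, $G$ (now the supremum of $v \cdot F(t)$ over $t \ge r_j$ subject to $t F(t) \le x$, with $(0,0)$ available by bidding below reserve) and its concave envelope $H$. The marginal-slope property from Lemma~\ref{lem:marginal} carries over unchanged: at any $b \ge r_j$ in the support of $b_{i,j}$, $H$ has slope at most $1$ to the right of $(b F(b), v F(b))$, so $F(t) \le \frac{(v-b) F(b)}{v-t}$ for $t \in [b, v)$. Repeating the conditional-on-$b$ payment integration then yields, for $b \in [r_j, v)$,
\[
\sum_{i' \ne i} \bE[p_{i', j} \mid b_{i,j} = b] \ge v(1 - F(b)) + (v - b) F(b) \ln F(b).
\]
The new ingredient is that $b \ge r_j \ge \gamma v$ implies $v - b \le (1-\gamma) v$; combined with $F(b) \ln F(b) \le 0$, this upgrades the bound to $v\bigl(1 - F(b) + (1-\gamma) F(b) \ln F(b)\bigr)$. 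For $b \ge v$, the trivial bound $v(1 - F(b))$ already dominates the target since $(1-\gamma) F(b) \ln F(b) \le 0$. After this pointwise bound, an application of Jensen's inequality to the convex function $z \mapsto 1 - z + (1-\gamma) z \ln z$ on $[0,1]$, with $x = \bE[F(b_{i,j})]$, closes the argument.

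The main obstacle is the case in which the support of $b_{i,j}$ places positive mass on bids $b < r_j$ that contribute zero value. To handle this uniformly I would set $\hat F(b) = F(b) \mathbf{1}\{b \ge r_j\}$ (so that $x = \bE[\hat F(b_{i,j})] = \bE[\val_{i,j}]/v$) and verify the matching pointwise bound $v(1 - \hat F(b) + (1-\gamma)\hat F(b) \ln \hat F(b)) = v$ for $b < r_j$. The key observation is that if such a $b$ lies in a utility maximizer's best response, all bids in the support yield zero utility; a standard deviation argument at any $b' \in [r_j, v)$ then forces $F(b') = 0$, so the highest other bid is at least $v$ almost surely and the others' expected payment is already $\ge v$. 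Stitching the in-support and below-reserve contributions together and invoking Jensen once more delivers the claim; the analogous modification handles the case where $i$ is a value maximizer, in parallel with the remark after Lemma~\ref{lem:local}.
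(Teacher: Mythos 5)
Your proposal is correct and follows essentially the same route as the paper's proof: reuse Lemma~\ref{lem:marginal} and the pointwise CDF bound, note that winning bids satisfy $b \ge r_j \ge \gamma v$ so $v - b \le (1-\gamma)v$ (which produces the $(1-\gamma)$ factor on the $F(b)\ln F(b)$ term), dispose of bids at or above $v$ trivially, observe that a below-reserve bid in the support forces the highest competing bid to be at least $v$ almost surely (so others' payment is already $\ge v$), and finish with Jensen applied to $z \mapsto 1 - z + (1-\gamma) z\ln z$. Your explicit use of $\hat F(b) = F(b)\mathbf{1}\{b \ge r_j\}$ is slightly more careful than the paper's notation but amounts to the same argument.
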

\begin{proof}
    The first part of the reasoning is almost identical to the proof of Lemma~\ref{lem:local}: For any $b$ in the support of $b_{i, j}$, Lemma~\ref{lem:marginal} gives us a way to upper bound the CDF of the highest other bid, and subsequently to lower bound the payment made by other bidders.
    Without repeating the entire argument, we start directly from this lower bound.
    Let $F$ be the CDF of the highest other bid, i.e., for any $t \ge 0$,
    \[
        F(t) = \Pr\left[\max_{i' \ne i} b_{i', j} \le t\right].
    \]
    When $b \ge v$, we have
    \[
        \sum_{i' \ne i} \bE[p_{i', j} \mid b_{i, j} = b] \ge (1 - F(b)) \cdot v \ge (1 - F(b) + (1 - \gamma) \cdot F(b) \ln F(b)) \cdot v_{i, j}.
    \]
    When $b < v$, again we must have one of the following two: Either (1) $b \ge r_j \ge \gamma \cdot v_{i, j}$, or (2) $F(v_{i, j}) = 1$.
    When (1) holds, we have
    \[
        \sum_{i' \ne i} \bE[p_{i', j} \mid b_{i, j} = b] \ge (1 - F(b)) \cdot v_{i, j} + F(b) \ln F(b) \cdot (v_{i, j} - b) \ge (1 - F(b) + (1 - \gamma) F(b) \ln F(b)) \cdot v_{i, j}.
    \]
    When (2) holds, clearly we have
    \[
        \sum_{i' \ne i} \bE[p_{i', j} \mid b_{i, j} = b] \ge v_{i, j} \cdot F(v_{i, j}) = v_{i, j} \ge (1 - F(b) + (1 - \gamma) F(b) \ln F(b)) \cdot v_{i, j}.
    \]
    Combining these cases and taking the expectation over $b_{i, j}$ (again letting $x = \bE[F(b_{i, j})]$), we get
    \[
        \sum_{i' \ne i} \bE[p_{i', j}] \ge \bE[(1 - F(b_{i, j}) + (1 - \gamma) \cdot F(b_{i, j}) \ln F(b_{i, j})) \cdot v_{i, j}] \ge (1 - x + (1 - \gamma) \cdot x \ln x) \cdot v_{i, j},
    \]
    where the last inequality from Jensen's inequality since $z \mapsto 1 - z + (1 - \gamma) \cdot z \ln z$ is convex on $[0, 1]$ for any $\gamma \in [0, 1]$.
\end{proof}

We are now ready to plug Lemma~\ref{lem:value-ml} and Lemma~\ref{lem:local-ml} into Lemma~\ref{lem:combination} and prove Theorem~\ref{thm:main-ml}.

\begin{proof}[Proof of Theorem~\ref{thm:main-ml}]
    Reproducing the argument in the proof of Lemma~\ref{lem:ub} (but with the improved tradeoffs given by Lemma~\ref{lem:value-ml} and Lemma~\ref{lem:local-ml}), we can lower bound $\poa_\gamma(n, m, B_u, B_v, \{v_{i, j}\})$ by
    \[
         \inf_{V_1, V_2 \ge 0, V_1 + V_2 > 0, x, y \in [0, 1]} \frac{\max\{x \cdot V_1, (1 - (1 - \gamma)x) \cdot V_1 + (1 - y + (1 - \gamma) y \ln y) \cdot V_2\} + y \cdot V_2}{V_1 + V_2}.
    \]
    A careful analysis (see Lemma~\ref{lem:max-ml} in Appendix~\ref{app:proofs}) of the infimum shows that it equals to
    \[
        \min_{y \in [0, 1]} \frac{1 + y \ln y - \gamma y (1 + \ln y)}{2 - y - \gamma + (1 - \gamma) y \ln y}. 
    \]
    \qedhere
\end{proof}

\appendix

\section{Analysis of the Infimum}
\label{app:proofs}

\begin{lemma}
\label{lem:max}
    \[
        \inf_{V_1, V_2 \ge 0, V_1 + V_2 > 0, x, y \in [0, 1]} \frac{\max\{x \cdot V_1, (1 - x) \cdot V_1 + (1 - y + y \ln y) \cdot V_2\} + y \cdot V_2}{V_1 + V_2} = \min_{y \in [0, 1]} \frac{1 + y \ln y}{2 - y + y \ln y}.
    \]
\end{lemma}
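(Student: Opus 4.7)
The plan is to optimize the expression variable by variable: first the infimum over $x$ (which collapses the inner $\max$), then the magnitudes $(V_1, V_2)$ using homogeneity, and finally $y$. Throughout, let $f(y) := 1 - y + y\ln y$, which is nonnegative on $[0, 1]$.

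First I would fix $V_1, V_2, y$ and optimize over $x$. The two linear functions $A(x) = xV_1$ (increasing) and $B(x) = (1-x)V_1 + f(y)V_2$ (decreasing) cross at $x^\star = \tfrac{1}{2} + \tfrac{f(y)V_2}{2V_1}$. If $f(y)V_2 \le V_1$ (Case A), then $x^\star \in [1/2, 1]$ and $\min_x \max\{A, B\} = (V_1 + f(y)V_2)/2$; otherwise $x^\star > 1$, and the minimum is attained at $x = 1$ with value $f(y)V_2$ (Case B). In Case B the objective collapses to $(f(y) + y)V_2/(V_1 + V_2)$, which is minimized by $V_1 \to 0$ (permissible since $V_2 > 0$ and $y < 1$ under Case B) to give $f(y) + y = 1 + y\ln y \ge 1 - 1/e > 0.457$. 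So Case B is not binding.

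For Case A, by homogeneity I set $V_2 = 1$ and $V_1 = r$ with $r \ge f(y)$. The objective becomes $g(r) = (r + f(y) + 2y)/(2(r+1))$, and a direct computation gives $g'(r) = (1 - f(y) - 2y)/(2(r+1)^2) = -y(1 + \ln y)/(2(r+1)^2)$, whose sign flips at $y = 1/e$. For $y \le 1/e$, $g$ is nondecreasing, so the minimum is attained at the left endpoint $r = f(y)$, yielding the closed form $h(y) := (f(y) + y)/(f(y) + 1) = (1 + y\ln y)/(2 - y + y\ln y)$. For $y > 1/e$, $g$ is nonincreasing and $\inf_r g(r) = 1/2$ as $r \to \infty$.

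Putting the cases together, the overall infimum equals $\min\{\min_{y \in [0, 1/e]} h(y),\, 1/2,\, 1 - 1/e\}$. One checks $h(0) = h(1/e) = 1/2$ and $h(1) = 1$, and the critical-point equation for $h$ reduces to $\ln y = y - 2$, which has a unique root in $(0, 1/e)$ (near $y \approx 0.158$) where $h$ dips strictly below $1/2$; on $[1/e, 1]$ one verifies $h \ge 1/2$ by monotonicity. Hence the infimum equals $\min_{y \in [0, 1]} h(y)$, with the matching upper bound realized at $V_2 = 1$, $V_1 = f(y^\star)$, $x = 1$, and $y = y^\star$ any minimizer of $h$. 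The main piece of bookkeeping, rather than any conceptual obstacle, is confirming that the minimizer of $h$ over $[0, 1]$ indeed lives in $[0, 1/e]$ (so that Case A delivers the answer) and that the boundary behaviors ($V_1 = 0$, $V_2 = 0$, $y = 1$) do not produce a smaller value.
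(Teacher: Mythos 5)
Your overall decomposition---optimize first over $x$ by collapsing the inner $\max$ via a crossing-point analysis, then over the ratio $V_1/V_2$ using homogeneity, then over $y$---is sound and in fact more explicit than the paper's proof, which instead passes to a compact domain, replaces $\inf$ by $\min$, and argues from optimality conditions that the two branches of the $\max$ must coincide at the optimizer. Your Case~A analysis, the closed form $h(y) = \frac{1+y\ln y}{2 - y + y\ln y}$, the sign discussion of $g'(r)$ via the factor $-y(1+\ln y)$, and the critical-point equation $\ln y = y - 2$ are all correct.

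However, your dismissal of Case~B contains a real error. The Case~B objective $\frac{(f(y)+y)V_2}{V_1+V_2}$ is \emph{decreasing} in $V_1$ (the numerator is independent of $V_1$ while the denominator grows), so it is \emph{maximized}, not minimized, as $V_1 \to 0$. The quantity $f(y)+y = 1 + y\ln y$ that you compute is therefore the supremum of the Case~B objective over its $V_1$-range, and bounding a supremum below by $1 - 1/e$ says nothing about the Case~B infimum. In fact the Case~B infimum over $V_1 \in [0, f(y)V_2)$ is obtained as $V_1 \to f(y)V_2$ and equals $\frac{f(y)+y}{f(y)+1} = h(y)$, which at the true minimizer $y^\star \approx 0.159$ is about $0.457 < 1 - 1/e$. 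Case~B is indeed not binding, but the correct reason is that its infimum coincides with the Case~A value at the boundary $r = f(y)$, so it contributes nothing beyond what Case~A already yields. With this correction the argument closes cleanly and recovers the paper's answer.
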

\begin{proof}
    Without loss of generality, suppose $V_1 + V_2 = 1$, so the domain is compact, and we only need to prove that
    \[
      \inf_{V_1 + V_2 = 1, V_1, V_2, x, y \in [0, 1]} \frac{\max\{x \cdot V_1, (1 - x) \cdot V_1 + (1 - y + y \ln y) \cdot V_2\} + y \cdot V_2}{V_1 + V_2} = \min_{y \in [0, 1]} \frac{1 + y \ln y}{2 - y + y \ln y}.
    \]
    
    Since the term inside the infimum is continuous on the domain we can replace the $\inf$ with a $\min$.
    In other words, there exists $V_1^*$, $V_2^*$, $x^*$ and $y^*$ that achieve the minimum.
    If $V_2 = 0$, then clearly it is lower bounded by $1/2$ (which is clearly suboptimal).
    So we must have $V_2^* > 0$.

    Now observe that the two terms within the inner $\max$ must be equal when the minimum is achieved.
    This is because: if $V_1^* \ge (1 - y^* + y^* \ln y^*) \cdot V_2^*$, then $x^*$ must equalize the two terms (any other choice of $x^*$ would be suboptimal).
    Otherwise, the inner $\max$ must be $(1 - x^*) \cdot V_1^* + (1 - y^* + y^* \ln y^*) \cdot V_2^*$, which means $x^*$ must be $1$, in which case $V_1^*$ does not appear at all.
    But then, increasing $V_1^*$ and decreasing $V_2^*$ simultaneously would strictly decrease the value, which contradicts the optimality of these choices.
    So we must have
    \[
        x^* \cdot V_1^* = (1 - x^*) \cdot V_1^* + (1 - y^* + y^* \ln y^*) \cdot V_2^* \implies V_1^* = \frac{(1 - y^* + y^* \ln y^*) \cdot V_2^*}{2x^* - 1} \text{ and } x^* \in (1/2, 1].
    \]
    Plugging this in, the minimum simplifies to
    \[
        \frac{x^* \cdot (1 + y^* + y^* \ln y^*) - y^*}{2x^* - y^* + y^* \ln y^*}.
    \]

    Observe that the denominator of the above fraction is always positive whenever $x^* \in (1/2, 1]$ and $y^* \in [0, 1]$.
    Moreover, depending on the value of $y^*$, the partial derivative of the above fraction with respect to $x^*$ is either never $0$ or always $0$.
    In the former case, it must be the case that $x^* = 1$ (any other choice cannot be optimal); in the latter case, the value of $x^*$ does not matter, so without loss of generality we can choose $x^* = 1$.
    Plugging this in, the minimum simplifies to
    \[
        \frac{1 + y^* \ln y^*}{2 - y^* + y^* \ln y^*} = \min_{y \in [0, 1]} \frac{1 + y \ln y}{2 - y + y \ln y}.
    \]
    This concludes the proof.
\end{proof}

\begin{lemma}
\label{lem:max-ml}
    \begin{align*}
        &~\inf_{V_1, V_2 \ge 0, V_1 + V_2 > 0, x, y \in [0, 1]} \frac{\max\{x \cdot V_1, (1 - (1 - \gamma)x) \cdot V_1 + (1 - y + (1 - \gamma) y \ln y) \cdot V_2\} + y \cdot V_2}{V_1 + V_2} \\
        =&~ \min_{y \in [0, 1]} \frac{1 + y \ln y - \gamma y (1 + \ln y)}{2 - y - \gamma + (1 - \gamma) y \ln y}.
    \end{align*}
\end{lemma}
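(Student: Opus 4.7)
The strategy is to mimic the proof of Lemma~\ref{lem:max} step by step while carrying along the new $\gamma$-dependent terms. First, since both numerator and denominator are homogeneous of degree one in $(V_1, V_2)$, I would normalize $V_1 + V_2 = 1$ so that the feasible region becomes compact and the infimum is attained at some optimizer $(V_1^*, V_2^*, x^*, y^*)$. A quick inspection rules out $V_2^* = 0$: this would force the objective to at least $1/(2-\gamma)$, which is strictly dominated by values available in the interior (as already witnessed by Corollary~\ref{cor:main-ml-autoonly}).

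Second, I would show that the two arguments of the inner $\max$ must coincide at the optimum. Setting $A := 1 - y^* + (1-\gamma) y^* \ln y^*$, the second argument minus the first equals $A V_2^* - ((2-\gamma) x^* - 1) V_1^*$. If this is strictly positive, the $\max$ equals the second argument, which is strictly decreasing in $x^*$, so $x^* = 1$; but then a simultaneous perturbation increasing $V_1^*$ and decreasing $V_2^*$ strictly decreases the objective---using that $A + y^* > \gamma$ on $[0,1]$, which follows from $y^* \ln y^* \ge -1/e$---contradicting optimality. If the difference is strictly negative, shrinking $x^*$ toward the crossover strictly decreases the $\max$ while leaving $y^* V_2^*$ unchanged, again a contradiction. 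Equating the two arguments yields
\[
    V_1^* = \frac{A \cdot V_2^*}{(2-\gamma) x^* - 1}, \qquad x^* \in \left(\tfrac{1}{2-\gamma},\, 1\right].
\]

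Third, substituting this relation together with $V_1^* + V_2^* = 1$ collapses the objective to
\[
    \frac{x^*(A + (2-\gamma) y^*) - y^*}{A - 1 + (2-\gamma) x^*}.
\]
A direct quotient-rule computation shows the numerator of the partial derivative in $x^*$ equals $A \cdot (1-\gamma) y^* (1 + \ln y^*)$, whose sign depends only on $y^*$. Hence on every slice of fixed $y^*$ the objective is either monotone in $x^*$ or identically constant in $x^*$, and exactly as in Lemma~\ref{lem:max} one may take $x^* = 1$ without loss of generality; in the monotone-increasing subcase the one-sided limit at $x^* \to 1/(2-\gamma)$ equals $1/(2-\gamma)$, which a short comparison shows is dominated by the value obtained in the monotone-decreasing subcase $y^* \le 1/e$ with $x^* = 1$. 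Plugging $x^* = 1$ into the displayed fraction and applying the identities $A + (1-\gamma) y^* = 1 + y^* \ln y^* - \gamma y^* (1 + \ln y^*)$ and $A + 1 - \gamma = 2 - y^* - \gamma + (1-\gamma) y^* \ln y^*$ yields the claimed formula.

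The main obstacle I anticipate is the derivative computation, where it is essential that the $\gamma$-dependence factors cleanly into $(1-\gamma)$ so that the sign of the derivative is governed by the familiar factor $1 + \ln y^*$; without this factorization the reduction to $x^* = 1$ would not carry through uniformly. A secondary technicality is the comparison step verifying that the corner $(x^*, y^*) = (1, y^*)$ with $y^* \le 1/e$ indeed dominates the boundary value $1/(2-\gamma)$ approached on the $y^* > 1/e$ slices, but this reduces to a one-variable inequality of the form $(1-\gamma)^2 \, y (1 + \ln y) \le 0$ for $y \le 1/e$ that is routine to check.
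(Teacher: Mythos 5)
Your proof is correct and essentially follows the template of the paper's Lemma~\ref{lem:max}, but it is slightly more careful than the paper's own proof of Lemma~\ref{lem:max-ml}. The paper splits into two cases according to whether $V_1$ lies above or below the threshold $\frac{(1-y+(1-\gamma)y\ln y)V_2}{1-\gamma}$, whereas you argue once via perturbation that the two arguments of the $\max$ must coincide at the optimum; both routes land on the same substitution $V_1 = \frac{A V_2}{(2-\gamma)x-1}$ with $A = 1-y+(1-\gamma)y\ln y$ and the same reduced objective, so they are not deeply different.

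Where you actually improve on the paper is the treatment of the $x$-dependence after the substitution. You compute that the numerator of $\partial_x$ equals $A\,(1-\gamma)\,y\,(1+\ln y)$, whose sign flips at $y = 1/e$. For $y > 1/e$ the reduced objective is \emph{increasing} in $x$, so on such slices the infimum over $x$ is approached at the left endpoint $x \to 1/(2-\gamma)$, where the value is $1/(2-\gamma)$, not at $x=1$. The paper's Case~2 asserts that ``the minimum can only be achieved when $x=1$'' without noting this, which is a small gap. You close it by observing that the claimed formula evaluated at $y = 1/e$ already equals $1/(2-\gamma)$ (indeed $g(y) \le 1/(2-\gamma)$ is equivalent to $(1-\gamma)^2 y(1+\ln y)\le 0$, true for $y\le 1/e$), so the $y>1/e$ branch cannot lower the overall infimum. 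This is the right way to make the reduction to $x^*=1$ rigorous.
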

\begin{proof}
    When $\gamma = 1$, it is easy to check the above minimum is $1$.
    In the rest of the proof, we assume $\gamma \in [0, 1)$.
    We break this into two cases:
    \begin{itemize}
        \item When
        \[
            V_1 \le \gamma \cdot V_1 + (1 - y + (1 - \gamma) y \ln y) \cdot V_2 \implies V_1 \le \frac{(1 - y + (1 - \gamma) y \ln y) \cdot V_2}{1 - \gamma},
        \]
        the minimum becomes
        \[
            \min \frac{(1 - (1 - \gamma)x) \cdot V_1 + (1 + (1 - \gamma) y \ln y) \cdot V_2}{V_1 + V_2},
        \]
        and the minimum must be achieved when $x = 1$, in which case it further simplifies to
        \[
            \min \frac{\gamma \cdot V_1 + (1 + (1 - \gamma) y \ln y) \cdot V_2}{V_1 + V_2}.
        \]
        Since $y \ln y \in [-1/e, 0]$ and $(1 + (1 - \gamma) y \ln y) > \gamma$, we want $V_1$ to be as large as possible, which means $V_1 = (1 - y + (1 - \gamma) y \ln y) \cdot V_2 / (1 - \gamma)$.
        Plugging this in, the minimum becomes
        \[
            \min \frac{1 + y \ln y - \gamma y (1 + \ln y)}{2 - y - \gamma + (1 - \gamma) y \ln y}.
        \]
        \item When
        \[
            V_1 \ge \gamma \cdot V_1 + (1 - y + (1 - \gamma) y \ln y) \cdot V_2 \implies V_1 \ge \frac{(1 - y + (1 - \gamma) y \ln y) \cdot V_2}{1 - \gamma},
        \]
        the minimum must be achieved when the two terms within the inner $\max$ are equalized, i.e.,
        \[
            x \cdot V_1 = (1 - (1 - \gamma)x) \cdot V_1 + (1 - y + (1 - \gamma) y \ln y) \cdot V_2 \implies V_1 = \frac{(1 - y + (1 - \gamma) y \ln y) \cdot V_2}{(2 - \gamma) x - 1}.
        \]
        This also means we can restricted the domain of $x$ to $x \in (1 / (2 - \gamma), 1]$.
        Plugging this in, the minimum simplifies to
        \[
            \min \frac{x(1 + y - \gamma y + (1 - \gamma) y \ln y) - y}{(2 - \gamma)x - y + (1 - \gamma) y \ln y}.
        \]
        Observe that the denominator is always positive for $x$ and $y$ in the domain, so the derivative of the fraction with respect to $x$ is either never $0$ or always $0$.
        In the former case, the minimum can only be achieved when $x = 1$; in the latter case, the value of $x$ does not matter, so we can without loss of generality choose $x = 1$.
        Plugging $x = 1$ in, the minimum simplifies to
        \[
            \min \frac{1 + y \ln y - \gamma y (1 + \ln y)}{2 - y - \gamma + (1 - \gamma) y \ln y}.
        \]
    \end{itemize}
    Combining the two cases concludes the proof.
\end{proof}

\bibliographystyle{plainnat}
\bibliography{ref}

\end{document}